\newtheorem{definition}{Definition}
\newtheorem{assumption}{Assumption}
\newtheorem{theorem}{Theorem}
\newtheorem{lemma}{Lemma}
\newtheorem{proposition}{Proposition}
\newenvironment{widetext}{}{}
\newcommand{\cH}{\mathcal{H}}
\newcommand{\bH}{\mathbb{H}}
\newcommand{\bB}{\mathbb{B}}
\newcommand{\cM}{\mathcal{M}}
\newcommand{\cK}{\mathcal{K}}
\newcommand{\bD}{\mathbb{D}}
\newcommand{\cE}{\mathcal{E}}
\newcommand{\bfS}{\mathbf{S}}
\newcommand{\bfE}{\mathbf{E}}
\newcommand{\bfF}{\mathbf{F}}
\newcommand{\bfH}{\mathbf{H}}
\newcommand{\UHA}{\mathrm{U}(\cH_A)}
\newcommand{\UHB}{\mathrm{U}(\cH_B)}
\newcommand{\UAs}{U^*_A}
\newcommand{\UAsd}{(\UAs)^\dg}
\newcommand{\dg}{\dagger}
\newcommand{\lmd}{\lambda}
\newcommand{\ls}{\lmd^{*}}
\newcommand{\Tr}{\operatorname{Tr}}
\newcommand{\diag}{\operatorname{diag}}
\newcommand{\grad}{\operatorname{grad}}
\renewcommand{\skew}{\operatorname{skew}}
\begin{document}

\title{How Entanglement Reshapes the Geometry of Quantum Differential Privacy}

\author[1]{Xi Wang\thanks{\href{mailto:xi.wang@sydney.edu.au}{xi.wang@sydney.edu.au}; \href{https://orcid.org/0009-0005-8052-0157}{ORCID: 0009-0005-8052-0157}}}
\author[2]{Parastoo Sadeghi\thanks{\href{mailto:p.sadeghi@unsw.edu.au}{p.sadeghi@unsw.edu.au}; \href{https://orcid.org/0000-0002-5929-9655}{ORCID: 0000-0002-5929-9655}}}
\author[1]{Guodong Shi\thanks{\href{mailto:guodong.shi@sydney.edu.au}{guodong.shi@sydney.edu.au}; \href{https://orcid.org/0000-0002-9965-9483}{ORCID: 0000-0002-9965-9483}}\thanks{This work was supported by the Australian Research Council under Grant DP190103615, Grant LP210200473, and Grant DP230101014, and by the Faculty of Engineering's Breakthrough Project at The University of Sydney.}}
\affil[1]{Australian Centre for Robotics, The University of Sydney, Australia}
\affil[2]{School of Engineering and Technology, University of New South Wales, Canberra, Australia}
\date{}
\maketitle

\begin{abstract}
    Quantum differential privacy provides a rigorous framework for quantifying privacy guarantees in quantum information processing. While classical correlations are typically regarded as adversarial to privacy, the role of their quantum analogue, entanglement, is not well understood. In this work, we investigate how quantum entanglement fundamentally shapes quantum local differential privacy (QLDP). We consider a bipartite quantum system whose input state has a prescribed level of entanglement, characterized by a lower bound on the entanglement entropy. Each subsystem is then processed by a local quantum mechanism and measured using local operations only, ensuring that no additional entanglement is generated during the process. Our main result reveals a sharp phase-transition phenomenon in the relation between entanglement and QLDP: below a mechanism-dependent entropy threshold, the optimal privacy leakage level mirrors that of unentangled inputs; beyond this threshold, the exact privacy leakage is strictly smaller than its unentangled value and admits an explicitly computable upper bound that strictly decreases with the entropy. This bound can also certify that some non-private mechanisms become private under sufficiently strong entanglement constraints. The phase-transition phenomenon gives rise to a nonlinear dependence of the certified privacy-leakage upper bound on the entanglement entropy, even though the underlying quantum mechanisms and measurements are linear. We show that the transition is governed by the intrinsic non-convex geometry of the set of entanglement-constrained quantum states, which we parametrize as a smooth manifold and analyze via Riemannian optimization. Our findings demonstrate that entanglement serves as a genuine privacy-enhancing resource, offering a geometric and operational foundation for designing robust privacy-preserving quantum protocols.
\end{abstract}

\section{Introduction}

Differential privacy~\cite{dwork2006differential,dwork2006calibrating} emerges as a fundamental mathematical framework for quantifying and limiting the privacy risk incurred when performing queries over sensitive data, with broad applications in machine learning~\cite{abadi2016deep}, government census statistics~\cite{abowd2018us}, and industrial crowdsourcing~\cite{erlingsson2014rappor}. Formally, a randomized mechanism $\cE(\cdot)$ acting on data points $d\in D \subseteq \mathbb{R}^n$ and producing $\cE(d)\in \mathbb{R}^m$ is differentially private if for any measurable set of outputs $T$, the probability of producing an output in $T$ changes by at most a multiplicative factor $e^{\varepsilon}$ when the input data varies from $d$ to a neighboring point $d'$.

Quantum differential privacy~\cite{zhou2017differential} generalizes the classical notion of differential privacy to the quantum domain. In quantum information processing, data are represented by quantum states $\rho$ in the form of density matrices, and mechanisms are modeled as quantum channels $\cE$ under quantum mechanics. Intrinsic randomness arises naturally from quantum measurements, where even for a fixed input state $\rho$, probing the output $\cE(\rho)$ via a measurement $\cM$ produces stochastic outcomes governed by the Born rule~\cite{nielsen2010quantum}.  Formally, quantum local differential privacy (QLDP) is evaluated by comparing the statistics of measurement outcomes induced by $\cE(\rho)$ and $\cE(\rho')$ for different input states $\rho$ and $\rho'$, under all admissible measurements available to an adversary.

\begin{definition}[Quantum Local Differential Privacy]
A quantum channel $\cE$ is said to be \emph{$\varepsilon$-quantum locally differentially private} ($\varepsilon$-QLDP) on a domain $D$ if, for any pair of quantum states $\rho,\rho'\in D$, any quantum measurement $\cM$, and any measurable set of outcomes $T$, the following inequality holds:
\[
\Pr\big[\cM(\cE(\rho))\in T\big]
\;\le\;
e^{\varepsilon}\,
\Pr\big[\cM(\cE(\rho'))\in T\big].
\]
\end{definition}

\subsection{Motivation: Differential Privacy vs Quantum Entanglement}

The achievable privacy of quantum channels over general quantum state spaces is now well understood in ~\cite{aasornson2019gentle} and~\cite{guan2025optimal}, owing to the linearity of quantum mechanics, where both a quantum channel $\mathcal{E}$ and the probability measure ${\rm Pr}(\cdot)$ induced by a measurement $\mathcal{M}$ depend linearly on the underlying state $\rho$. A fundamental distinction from classical information, however, is the presence of {\it quantum entanglement}~\cite{nielsen2010quantum}, which arises when a composite quantum state cannot be expressed as a product of subsystem states. Entangled states exhibit measurement correlations with no classical explanation~\cite{clauser1969proposed}, and they also underpin the computational advantage of quantum computing, as most quantum algorithms with proven speedups rely on entangled states~\cite{harrow2009quantum,luis1996optimum}. In this paper, we aim to understand the role of entanglement in the QLDP. 

We consider a composite quantum system consisting of two parties with local state spaces $\cH_A$ and $\cH_B$, and joint space $\cH_{AB}=\cH_A\otimes\cH_B$. For a bipartite pure state $\rho$, entanglement is quantified by the entanglement entropy
\[
\bfE(\rho):=-\sum_j \sigma_j(\rho_A)\log \sigma_j(\rho_A)
=-\sum_j \sigma_j(\rho_B)\log \sigma_j(\rho_B),
\]
where $\rho_A$ is the reduced state of $\rho$ on subsystems $A$, and $\{\sigma_j(\rho_A)\}_j$ are the eigenvalues of $\rho_A$. The reduced state $\rho_B$ and its eigenvalues $\{\sigma_j(\rho_B)\}_j$  are defined analogously. The entanglement entropy measures the amount of quantum correlation shared between the two parties: $\bfE(\psi)$ vanishes if and only if there is no entanglement, and larger values of $\bfE(\psi)$ correspond to stronger entanglement. Consequently, we begin our study by restricting the input data on the following entanglement-constrained domain
\[
\bH_s :=
\bigl\{\, \rho : \mathrm{rank}(\rho)=1,\ \bfE(\rho)\ge s \,\bigr\},
\]
which consists of all bipartite pure states in $\cH_{AB}$ with entanglement entropy at least $s$. We further assume that quantum information processing is performed locally on each subsystem, thereby excluding any additional entanglement introduced by the processing. Specifically, the mechanism applied to the joint system is of the product form $\cE=\cE_A\otimes\cE_B$, where $\cE_A$ and $\cE_B$ are quantum channels acting on $\cH_A$ and $\cH_B$, respectively. Likewise, the adversary's measurements are restricted to product measurements of the form $\cM=\cM_A\otimes\cM_B$. Consequently, we obtain a setup that coincides with a standard Bell experiment, now viewed from a privacy lens. 

\begin{figure*}
\centering
\begin{tikzpicture}[
    node distance=1cm and 2cm,
    block/.style={
        draw, 
        rectangle, 
        minimum width=2.5cm, 
        minimum height=1.5cm, 
        align=center,
        thick,
        fill=white 
    },
    arrow/.style={
        ->, 
        >=Stealth, 
        thick
    }
]
    
    \node[block] (mechA) at (0, 2) {Mechanism A \\ $\mathcal{E}_A$};
    \node[block, right=of mechA] (measA) {Measurement \\ $\cM_A$};
    
    \node[block] (mechB) at (0, -2) {Mechanism B \\ $\mathcal{E}_B$};
    \node[block, right=of mechB] (measB) {Measurement \\ $\cM_B$};

    \coordinate (midpointLeft) at ($(mechB.west)!0.5!(mechA.west)$);
    
    \begin{scope}[local bounding box=inputScope, shift={($(midpointLeft)+(-5cm,0)$)}]
        \node[block, minimum width=3cm, minimum height=1.5cm] (subA) at (0, 1.5) {Subsystem \\ $\cH_A$};
        \node[block, minimum width=3cm, minimum height=1.5cm] (subB) at (0, -1.5) {Subsystem \\ $\cH_B$};

    \end{scope}
    
    \begin{pgfonlayer}{background}
        \node[draw, inner sep=22pt, rounded corners=15pt, fit=(inputScope), label=below:{}] (cloudInput) {};
    \end{pgfonlayer}

    \node[below=0.3cm of cloudInput, font=\large] {\bf Entangled Input};

    
    \draw[arrow] (subA.east) --  (mechA.west);
    \draw[arrow] (subB.east) --  (mechB.west);

    \draw[arrow] (mechB.east) -- (measB.west);
    \draw[arrow] (mechA.east) -- (measA.west);


    \begin{scope}[on background layer]
        \node[draw, dashed, inner sep=18pt, fit=(mechB) (mechA)] (boxProduct) {};
        \node[below=0.2cm of boxProduct, font=\large\bfseries] {Product Mechanism};

        \node[draw, dashed,inner sep=18pt, fit=(measB) (measA)] (boxAdversary) {};
        \node[below=0.2cm of boxAdversary, font=\large\bfseries] {Local Adversary};
    \end{scope}
\end{tikzpicture}
\caption{Pipeline of Entanglement-constrained QLDP against local measurement adversary.}
\label{fig:Pipeline}
\end{figure*}

By restricting the input states to $\bH_s$ and considering local quantum channels $\cE=\cE_A\otimes\cE_B$ and local measurements $\cM=\cM_A\otimes\cM_B$, we observe a nonlinear phase-transition phenomenon in how entanglement influences the QLDP,  which is formalized in the following theorem.

\begin{theorem}[Informal]\label{thm-main}
For any product quantum channel
$\cE=\cE_A\otimes\cE_B$, the optimal QLDP level
$\varepsilon^\star(s)$ exhibits a phase-transition behavior as a
function of the entanglement entropy $s$. In particular, there exists
a threshold $s_0$, depending on the spectral properties of $\cE$, such
that:

\begin{itemize}
\item[(i)] \emph{(Low-entanglement regime).}
If $s\le s_0$, the optimal QLDP level remains unchanged compared with
the unentangled case, i.e.,
\[
\varepsilon^\star(s)=\varepsilon^\star(0).
\]

\item[(ii)] \emph{(High-entanglement regime).}
If $s>s_0$, the optimal QLDP level is strictly smaller than in the
unentangled case. More precisely, there exists an explicitly
computable and strictly decreasing upper bound $\overline{\varepsilon}(s)$ such that
\[
\varepsilon^\star(s)
\le \overline{\varepsilon}(s)
< \varepsilon^\star(0).
\]
Thus, increasing entanglement provides progressively stronger
certified privacy guarantees. In addition, whenever
$\overline{\varepsilon}(s)<+\infty$, we also have
$\varepsilon^\star(s)<+\infty$. Consequently, some mechanisms that
are non-private in the unentangled setting can become private under a
sufficiently strong entanglement constraint.
\end{itemize}
\end{theorem}

Several studies on classical differential privacy have shown that correlations in the input data can weaken privacy guarantees~\cite{yang2015bayesian,yilmaz2022genomic,zhang2022correlated}. At first glance, this appears to contradict Theorem~\ref{thm-main}, where quantum entanglement enhances privacy. The discrepancy arises because the two notions of correlation play fundamentally different roles. In classical differential privacy, correlations are {\em extrinsic}: they represent additional side information that an adversary can exploit to better distinguish neighboring datasets. In contrast, the entanglement considered in Theorem~\ref{thm-main} is {\em intrinsic}: it imposes structural constraints on the admissible set of quantum input states rather than providing extra information to the adversary. By reducing the feasible input set over which the worst-case privacy leakage is evaluated, sufficiently strong entanglement can improve the resulting quantum local differential privacy guarantee.

Theorem~\ref{thm-main} points toward a broader connection between quantum differential privacy and quantum non-locality. Bell inequalities demonstrate that quantum entanglement enables correlations that are unattainable within classical physics, thereby enlarging the set of observable nonlocal behaviors under local measurements \cite{bell1964einstein}. Theorem~\ref{thm-main} suggests a complementary phenomenon: entanglement can also fundamentally reduce the distinguishability of locally processed quantum data, thereby enlarging the set of achievable privacy guarantees once the entanglement entropy exceeds a critical threshold. In fact, under simplified settings, the critical entanglement threshold ($s_0$) in Theorem~\ref{thm-main} can be explicitly related to the CHSH Bell parameter~\cite{clauser1969proposed}. For two-qubit pure states, both the entanglement entropy and the maximal CHSH violation are determined by the same Schmidt spectrum. Consequently, the onset of improved quantum local differential privacy can be interpreted as occurring beyond a corresponding threshold in Bell-type nonclassicality. This observation suggests that the privacy-enhancing phase transition identified in Theorem~\ref{thm-main} may admit an operational interpretation in terms of Bell nonlocal correlations. In Bell scenarios, entanglement enhances the strength of observable correlations beyond classical limits; in quantum local differential privacy, it suppresses the amount of information that local observations can reveal about the underlying data.

\subsection{Related Work}
Differential privacy (DP), introduced in~\cite{dwork2006differential} and ~\cite{dwork2006calibrating}, provides a formal guarantee that the output of an algorithm changes only slightly when evaluated on neighboring input datasets. Classical differential privacy typically assumes a trusted central curator with access to raw data, an assumption that is often unrealistic in decentralized settings, where data are held locally and trust in any curator cannot be assumed. The utility of LDP, which concerns the sample complexity required for LDP algorithms to effectively learn the data, was studied in~\cite{evfimievski2003limiting} and \cite{kasiviswanathan2011can}. The subsequent work~\cite{kairouz2014extremal} focused on designing optimal-utility LDP mechanisms that maximize utility for a given privacy leakage level.  It was further shown that  correlations among data can weaken privacy guarantees. For example,~\cite{liu2016dependence} proposed a real-world inference attack in which an adversary exploits social network friendships as correlations, to bypass the noise added by standard differential privacy and infer sensitive information such as location data. Similarly,~\cite{yilmaz2022genomic} showed that the standard randomized response mechanism~\cite{Warner01031965} under LDP is vulnerable to correlation attacks on single-nucleotide polymorphism (SNP) data, where SNPs are genetic variations at a single position in the DNA sequence that are inherently correlated. To address privacy risks arising from correlated data, several frameworks have been proposed, including the Pufferfish privacy~\cite{Kifer2014pufferfish} and Bayesian differential privacy~\cite{yang2015bayesian}. All of the aforementioned classical privacy frameworks provide essential theoretical foundations for recent advances in quantum differential privacy.

Quantum differential privacy (QDP) is a generalization of the differential privacy frameworks to the quantum setting.  By defining the neighborhood of quantum states via the trace distance of their density matrices, ~\cite{zhou2017differential} extended classical differential privacy to the quantum setting by requiring the outputs of neighboring quantum states to be approximately indistinguishable with respect to all quantum measurements. Subsequently,~\cite{aasornson2019gentle} established a fundamental connection between QDP and gentle measurements, showing that privacy-preserving measurements necessarily induce limited disturbance to quantum states. Building on these foundations, a range of QDP algorithms have been developed for quantum computation and quantum machine learning~\cite{li2021quantum,quek2021private,watkins2023quantum,du2021quantum}.  In the context of quantum local differential privacy (QLDP),~\cite{hirche2023quantum} generalized classical LDP to the quantum setting by requiring QDP for any two distinct quantum states. More recently,~\cite{guan2025optimal} characterized QLDP levels via an eigenvalue optimization framework and analyzed optimal-utility privacy mechanisms, with utility measured in terms of fidelity~\cite{uhlmann1976transition} and trace distance. Recent work by~\cite{nuradha2024pufferfish} studies privacy guarantees for classically correlated quantum states under the Pufferfish framework.  

\subsection{Paper Organization}

The remainder of this paper is organized as follows.   
In Section~\ref{sec:problem}, we formalize the problem setting and define quantum local differential privacy on entanglement-constrained states. Section~\ref{sec:main} presents the main technical results of the paper, showing the phase-transition phenomenon stated in Theorem~\ref{thm-main}. Numerical examples are given in Section~\ref{sec:exp}, and Section~\ref{sec:con} concludes the paper. Detailed proofs of our main results are deferred to the appendix. Theorems~\ref{thm:ldp-characterization}--\ref{thm:min-product} provide an exact characterization of the optimal privacy budget in terms of the extremal privacy energies, a closed-form expression for the maximal privacy energy, and a computable lower bound for the minimal privacy energy. Together, these results establish the low-entanglement behavior and the certified high-entanglement upper bound stated in Theorem~\ref{thm-main}.

\section{Problem Definition}\label{sec:problem}
In this section, we establish the system model for studying the relationship between quantum entanglement and QLDP. The model consists of three components: entangled quantum inputs, a product mechanism that avoids introducing additional entanglement, and local measurements as adversaries, as illustrated in Figure~\ref{fig:Pipeline}. We then introduce the definition of entanglement-constrained $\varepsilon$-QLDP against local measurement adversaries.

\subsection{Product Quantum Channel on Entangled Quantum States}

We first consider a two-party quantum system consisting of two subsystems, $A$ and $B$. The state space of each subsystem is denoted by $\mathcal{H}_A$ and $\mathcal{H}_B$, respectively, with
$\dim \mathcal H_A=\dim \mathcal H_B=N$; and the overall system space is $\mathcal{H}_{AB}:=\mathcal{H}_A \otimes \mathcal{H}_B$. We consider a quantum channel over the collective system as a product of two local channels described by $\cE=\cE_A \otimes \cE_B$. Here,  we assume that $\cE_A $ and $\cE_B$ are general CPTP maps, i.e., there exist finite families $\{\mathsf{E}_{A}(k)\}_k \subset \bB(\cH_A)$ and $\{\mathsf{E}_{B}(k)\}_k \subset \bB(\cH_B)$, under which
\begin{align*}
\cE_A (\rho_A) &= \sum_{k} \mathsf{E}_{A}(k) \rho_A \mathsf{E}_{A}^\dg(k),\ \ \forall \rho_A \in \bD(\mathcal{H}_A),   \\
\cE_B (\rho_B) &= \sum_{k} \mathsf{E}_{B}(k) \rho_B\mathsf{E}_{B}^\dg(k), \ \ \forall \rho_B \in \bD(\mathcal{H}_B).   
\end{align*}
It holds that $\sum_{k} \mathsf{E}_{A}^\dg(k)  \mathsf{E}_{A}(k) =\mathsf{I}_{A}$ and $\sum_{k} \mathsf{E}_{B}^\dg(k)  \mathsf{E}_{B}(k)=\mathsf{I}_B$.

The physical interpretation of $\cE_A$ or $\cE_B$ arises from open quantum systems. Each subsystem, $A$ and $B$, independently interacts with an environment, which is also a quantum system, where exchange of information, energy, and coherence occurs. The dynamics of the collective system of subsystem $A$ and its environment (or $B$ and its environment) is described by a Schr\"odinger equation and undergoes a unitary evolution. At the steady state, tracing out the environment leaves the state evolution of subsystem $A$ to be described by $\cE_A$ (or by $\cE_B$ for subsystem $B$). Here we choose  $\cE$ to be $\cE_A \otimes \cE_B$ so that the channel acts on subsystem $A$ and $B$ separately and independently through $\cE_A$ and $\cE_B$. This facilitates a clean study on how entanglement from the input of $\cE$ may contribute to differential privacy, as now all possible entanglement emerges from the input state only.

\begin{assumption}
The domain $D$ of the channel $\cE$ is the pure states in $\bD(\mathcal{H}_{AB})$ whose entanglement entropy is greater than or equal to $s$, i.e.,   
\begin{equation}
D=\bH_s:=\Bigl\{\rho\in \bD(\mathcal{H}_{AB}): {\rm rank}(\rho)=1,\ \bfE(\rho)\geq s\Bigr\}.
\end{equation}
\end{assumption}


\subsection{Local Measurement Adversaries and Quantum Local Differential Privacy}

We model the adversary as having access to the output state $\cE(\rho)$ and being allowed to perform local measurements on $\cE_A(\rho)$ and $\cE_B(\rho)$, followed by arbitrary classical post-processing.

\begin{definition}[Local measurement adversary]
A local measurement (LM) adversary is an adversary who measures $\mathcal{H}_{AB}$ by local operations on $\mathcal{H}_A$ and $\mathcal{H}_B$, and combines the classical outcomes. Formally, an LM adversary is specified by a finite family of POVM elements on $\mathcal{H}_{AB}$
\[
  \{\cM_{a,b}\}_{(a,b) \in O_A \times O_B},  \qquad \cM_{a,b} := \cM_a \otimes \cM_b,
\]
with local POVMs $\{\cM_a\}_{a \in O_A} \subset \bB(\mathcal{H}_A)$ and $\{\cM_b\}_{b \in O_B} \subset \bB(\mathcal{H}_B)$.
Given $\cE(\rho)$, the local measurement adversary’s classical observation is the random variable $k \in  O_A \times O_B$ where 

\[
\Pr[k=(a,b)\mid\cE(\rho)]
=\Tr\Bigl((\cM_a\otimes\cM_b)\cE(\rho)\Bigr)
=\Tr\Bigl((\cE_A^{\dg}(\cM_a)\otimes\cE_B^{\dg}(\cM_b))\rho\Bigr).
\]
where $\cE^{\dg}$ is the adjoint operator of $\cE$. 
\end{definition}
 
Restricting the input domain to $D=\bH_s$ and the adversary to LMs, we define the entanglement-constrained $\varepsilon$-QLDP against local measurement adversaries (ECLM-$\varepsilon$-QLDP) as follows.

\begin{definition}[entanglement-constrained $\varepsilon$-local measurement QLDP]
A quantum mechanism $\cE$ operating on $\cH_{AB}$ is said to be
$\varepsilon$-\emph{quantum locally differentially private against local measurement adversaries}
(ECLM-$\varepsilon$-QLDP) on the entanglement-constrained domain $\bH_s$ if, for any local measurement $\cM$ and any quantum states $\rho, \rho' \in \bH_s$, the following inequality holds:
\[
\Pr\bigl[\cM(\cE(\rho))\in T\bigr]
\le e^{\varepsilon}\Pr\bigl[\cM(\cE(\rho'))\in T\bigr],
\qquad \forall T\subseteq O_A\times O_B.
\]
\end{definition}

\subsection{The System Setup: Rationale for Separating the Entanglement Source}
We have explained that the proposed product mechanism and local measurements serve the technical purpose of isolating the origin of entanglement, ensuring that within the quantum differential privacy framework $\mathbb{H}_s$ is the sole source of entanglement. On the other hand, the system setup is a fundamental object of study in quantum information theory in its own right.

In fact, the physical setting of our quantum information processing, shown in Figure~\ref{fig:Pipeline} on the entangled composite system $A$ and $B$, parallels the canonical architecture of Bell-inequality experiments~\citep{bell1964einstein}. In such experiments, a central source prepares entangled pairs, which typically can be polarization-entangled photons, and distributes one particle to each of two spatially separated measurement stations. Each station independently selects a measurement basis, which in our framework corresponds to the channel–measurement pairs $(\mathcal{E}_A,\mathcal{M}_A)$ and $(\mathcal{E}_B,\mathcal{M}_B)$, and records its local outcome. Correlation statistics are then computed from coincident detection events across the two sites. Bell-type inequalities~\citep{bell1964einstein} impose elementary constraints on these statistics under any classical model based solely on shared randomness and local information; their violation certifies that the correlations carried by the entangled systems $A$ and $B$ cannot be reproduced by classical means. This Bell-type architecture is foundational in quantum physics, as it provides the most direct operational evidence that entanglement produces correlations unattainable within any classical theory~\citep{clauser1969proposed}.

\section{Main Results}\label{sec:main}
In this section, we present the main results. We first characterize EC-LM-$\varepsilon$-QLDP in terms of the extremal privacy energy of the quantum system $\cK_\phi$. We then formulate the computation of extremal privacy energy as a constrained Riemannian optimization problem. By analyzing the KKT conditions of the Riemannian optimization problem, we derive a closed-form expression for the maximal privacy energy and a lower bound on the minimal privacy energy. These results reveal a phase-transition phenomenon and establish our main theorem, Theorem~\ref{thm-main}.

\subsection{Entanglement-Constrained Privacy}

For any local measurements $\{\cM_a \otimes \cM_b\}_{(a,b) \in O_A \times O_B}$, we define the induced local observables
\[
\mathcal{K}_a:=\cE_A^\dg(\cM_a),\qquad
\mathcal{K}_b:=\cE_B^\dg(\cM_b),\qquad
\cK_{\phi}:=\mathcal{K}_a\otimes\mathcal{K}_b.
\]
In particular, when the POVM elements are rank-one projectors $\cM_a=\ket{\phi_a}\bra{\phi_a}$ and $\cM_b=\ket{\phi_b}\bra{\phi_b}$, we define the induced operators
\begin{align}
    \mathcal{K}_{\phi,a} &:=
      \cE_A^\dg\bigl(\ket{\phi_a}\bra{\phi_a}\bigr),\notag\\
    \mathcal{K}_{\phi,b} &:=
      \cE_B^\dg\bigl(\ket{\phi_b}\bra{\phi_b}\bigr),\notag\\
    \mathcal{K}_{\phi} &:=
      \mathcal{K}_{\phi,a}\otimes\mathcal{K}_{\phi,b}.
      \label{eq:K}
\end{align}
We can then further define 
\begin{equation}
J_{\max}(\cK_{\phi},s)=\max_{\ket{\psi}\in\bH_s}\bra{\psi}\cK_{\phi}\ket{\psi},
\qquad
J_{\min}(\cK_{\phi},s)=\min_{\ket{\psi}\in\bH_s}\bra{\psi}\cK_{\phi}\ket{\psi}.
\end{equation}
The following theorem characterizes ECLM-$\varepsilon$-QLDP through the extremal behavior of $\mathcal{K}_{\phi}$ over the domain $\bH_s$.

\begin{theorem}[Optimal Privacy Budget]
\label{thm:ldp-characterization}
Consider a bipartite quantum system with Hilbert space $\mathcal H_{AB} = \mathcal H_A \otimes \mathcal H_B$ with $\cE=\cE_A\otimes \cE_B$ being a product mechanism. Then $\cE$ is ECLM-$\varepsilon$-QLDP  on entanglement domain $\bH_s$ with optimal $\varepsilon^*$ as
\begin{align}\label{eq:eps-ldp-lm}
\varepsilon^*(s)
=\log\max_{\substack{\ket{\phi_a}\in\cH_A\\
                       \ket{\phi_b}\in\cH_B}}
\frac{J_{\max}(\cK_{\phi},s)}{J_{\min}(\cK_{\phi},s)}.
\end{align}
\end{theorem}
 
The proof of Theorem~\ref{thm:ldp-characterization} is in Appendix~\ref{app:thm:ldp-characterization}. The characterization in Theorem~\ref{thm:ldp-characterization} is a generalization of the optimal quantum local differential privacy established in~\cite{guan2025optimal}. From Theorem 2 in \cite{guan2025optimal}, if we consider all the quantum states without the entanglement entropy constraint, i.e., $\rho\in \bH_s$ for $s=0$, the optimal privacy budget will be determined by the extremal eigenvalues of $\cK_\phi$ by $\varepsilon^*(0)=\log \frac{    \sigma_{\max} (\cK_\phi) }{    \sigma_{\min} (\cK_\phi) }$, where
\[
\sigma_{\max}(\cK_\phi)=\max_{\ket{\psi}\in\bH_0}\bra{\psi}\cK_{\phi}\ket{\psi},
\qquad
\sigma_{\min}(\cK_\phi)=\min_{\ket{\psi}\in\bH_0}\bra{\psi}\cK_{\phi}\ket{\psi}.
\]

Note that the quantity  $\bra{\psi}\cK_{\phi}\ket{\psi}= {\rm Tr} \big(\cK_{\phi}\ket{\psi} \bra{\psi}\big)$ can be interpreted physically as the {\it energy} of a quantum system at state $\ket{\psi}$ when the system Hamiltonian is given by $\cK_{\phi}$. We refer to the quantity as the \emph{privacy energy}. According to Theorem \ref{thm:ldp-characterization}, the optimal privacy leakage level is determined by the maximum and minimum admissible privacy energies within the entanglement-constrained set $\bH_s$.


\subsection{The Riemannian Gradient of the Privacy Energy}

While the energy $\bra{\psi}\cK_{\phi}\ket{\psi}$ depends linearly on both the state $\ket{\psi}\bra{\psi}$ and the operator $\cK_{\phi}$, the admissible set $\bH_s$ of entangled states is non-convex. As a result, the maximally and minimally admissible privacy energies are constrained by the geometry of $\bH_s$. To capture this geometry explicitly, we introduce a parametrization of the energy function based on the Schmidt decomposition and unitary group actions.\footnote{For a Hilbert space $\cH$ with $\dim\cH=M$, the associated unitary group is defined as $\mathrm{U}(\cH):=\{U\in\mathbb C^{M\times M}\mid U^{\dg}U=UU^{\dg}=I\}$.}

For any bipartite pure state $\ket{\psi}\in\bH_s$, there exists a Schmidt decomposition
of the form $\ket{\psi} = \sum_{j=1}^N \sqrt{\lambda_j}\ket{j_A}\otimes\ket{j_B}$,
where $\{\ket{j_A}\}_{j=1}^N$ and $\{\ket{j_B}\}_{j=1}^N$ are orthonormal bases of $\cH_A$ and $\cH_B$, respectively. The vector $\lmd:=(\lmd_1,\lmd_2,\ldots,\lmd_N)$ collects the Schmidt coefficients of $\ket{\psi}$ and satisfies $\lmd_j\ge 0$ and $\sum_{j=1}^N \lmd_j=1$. If $\ket{\psi}$ has Schmidt rank strictly smaller than $N$, we implicitly pad $\lmd$ with zeros so that it can always be regarded as an element of $\mathbb R^N$. Moreover, for any $\ket{\psi}\in\bH_s$, its Schmidt coefficients $\lambda$ satisfies  $  {\bfE}(\psi )=  \bfH(\lmd)$, 
where $\bfH(\lmd):=-\sum_{j=1}^N \lmd_j\log\lmd_j$ denotes the Shannon entropy. We may also express the local operators $\cK_{\phi,a}$ and $\cK_{\phi,b}$ in the Schmidt bases $\{\ket{j_A}\}_{j=1}^N$ and $\{\ket{j_B}\}_{j=1}^N$, respectively. Equivalently, this corresponds to applying unitary rotations to the operators, leading to the representations
\begin{align}
\widetilde{\cK}_A := U_A\cK_{\phi,a}U_A^{\dg},
\qquad
\widetilde{\cK}_B := U_B\cK_{\phi,b}U_B^{\dg},
\end{align}
with $U_A\in\UHA$ and $U_B\in\UHB$. Under this parametrization, the energy $\bra{\psi}\cK_{\phi}\ket{\psi}$ can be written as
\begin{equation}
\bra{\psi}\cK_{\phi}\ket{\psi}
=\Tr\bigl(\Lambda^{1/2}\widetilde{\cK}_A\Lambda^{1/2}\widetilde{\cK}_B\bigr),
\qquad \Lambda:=\diag(\lmd_1,\ldots,\lmd_N).
\end{equation}
Consequently, computing $J_{\max}$ (or $J_{\min}$) reduces to maximizing (or minimizing)
a function on the product manifold
$\mathcal P:=\mathbb R^N\times\UHA\times\UHB$
\[
\bfF:\mathcal P\to\mathbb R,\qquad
(\lmd,U_A,U_B)\mapsto
\Tr\bigl(\Lambda^{1/2}\widetilde{\cK}_A\Lambda^{1/2}\widetilde{\cK}_B\bigr),
\]
subject to the constraints $\lmd_j\ge 0$, $
\sum_{j=1}^N \lmd_j=1$, and $
\bfH(\lmd)\ge s$. Then any local optimum of the optimization problem on the product manifold $\mathcal P$ admits a characterization via the Karush--Kuhn--Tucker (KKT) conditions within the framework of Riemannian optimization~\citep{yang2014optimality}. In particular, the stationarity with respect to the unitary variables $U_A$ and $U_B$ is characterized by the vanishing of the {\it Riemannian gradient} on the unitary group, where the Riemannian gradient of a function $f$ defined on the unitary group $\mathrm{U}(\cH)$ at $U\in\mathrm{U}(\cH)$ is  the orthogonal projection of the Euclidean gradient onto the tangent space, namely,
\[
\grad_U f = U\skew\left(U^\dg \nabla f(U)\right),
\]
where $\nabla f(U)$ denotes the Euclidean gradient of $f$ viewed as a function on the ambient space $\mathbb C^{n\times n}$ and $\skew(X):=\tfrac12(X-X^\dg)$.

\begin{theorem}[Riemannian Optimality Conditions for Privacy Energies]\label{thm:manifold-opt-Jmax}
Under the parametrization over $\mathbb R^N \times \UHA \times \UHB$, any local maximizer $(\lmd^*,\UAs,U_B^*)$ of  $J_{\max}(\cK_{\phi},s)$ satisfies the following KKT conditions with Lagrange multipliers $\nu^*,\xi^*\in \mathbb R, \eta^* \in \mathbb R^N$.
\begin{widetext}
\begin{align}
\textsc{(ST-U)}\qquad &\grad_{U_A} \bfF(\lmd^*,\UAs,U_B^*)=0, \quad \grad_{U_B}\bfF(\lmd^*,\UAs,U_B^*)=0; \label{eq:ST-U}
\\
\textsc{(ST-$\lambda$)} \qquad &\bigl[\nabla_{\lmd} \bfF(\lmd^*,\UAs,U_B^*)\bigr]_j = \nu^*-\eta_j^*+\xi^*(1+\log\lmd_j^*), \quad j=1,\ldots,N. \label{eq:ST-lambda-max} \\
\textsc{(PF)} \qquad &\lmd_j^*\ge 0,\quad \sum_{j=1}^N \lmd_j^*=1,\quad \bfH(\lmd^*)\ge s, \quad j=1,\ldots,N.  \label{eq:PF} \\
\textsc{(DF)} \qquad &\eta_j^*\ge 0,\quad \xi^*\ge 0 \quad j=1,\ldots,N.\label{eq:DF} \\
\textsc{(CS)} \qquad & \eta_j^*\lmd_j^*=0,\quad \xi^*\bigl(\bfH(\lmd^*)-s\bigr)=0 \quad j=1,\ldots,N. \label{eq:CS}
\end{align}
Similarly, for any local minimizer $(\lmd^*,\UAs,U_B^*)$ of $J_{\min}(\cK_{\phi},s)$, the same conditions~\eqref{eq:ST-U}--\eqref{eq:CS} hold, except that the sign in \eqref{eq:ST-lambda-max} is reversed, namely
\begin{align}
\bigl[\nabla_{\lmd} \bfF(\lmd^*,\UAs,U_B^*)\bigr]_j=\nu^*+\eta_j^*-\xi^*(1+\log\lmd_j^*), \quad j=1,\ldots,N. \label{eq:ST-lambda-min}
\end{align}
\end{widetext}
\end{theorem}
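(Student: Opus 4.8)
The plan is to read off the conditions \eqref{eq:ST-U}--\eqref{eq:CS} as the Riemannian KKT system for a single smooth constrained optimization problem on the product manifold $\mathbb R^N \times \UHA \times \UHB$, with objective $\bfF$ and feasible set cut out by $\lmd_j\ge 0$, $\sum_j\lmd_j=1$, and $\bfH(\lmd)\ge s$. The structural observation that drives everything is that \emph{all three constraints involve only the flat Euclidean factor} $\mathbb R^N$, while the two unitary factors enter through $\bfF$ but are otherwise unconstrained. In the framework of \cite{yang2014optimality} the product Riemannian gradient splits factorwise, so the stationarity condition $\grad L=0$ decouples: on each unitary factor it reduces to the vanishing of the partial Riemannian gradient (the Lagrangian depends on $U_A,U_B$ only through $\bfF$), and on the flat factor $\mathbb R^N$ the Riemannian gradient equals the Euclidean one, yielding the classical KKT stationarity together with primal/dual feasibility and complementary slackness.

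Concretely, I would form the Lagrangian $L=\bfF+\sum_{j}\eta_j\lmd_j+\xi\big(\bfH(\lmd)-s\big)-\nu\big(\sum_j\lmd_j-1\big)$ with $\eta_j,\xi\ge 0$. For the unitary blocks, a routine matrix-calculus computation produces the ambient Euclidean gradient $\nabla_{U_A}\bfF$ of the smooth map $U_A\mapsto\Tr\!\big(\Lambda^{1/2}U_A\cK_{\phi,a}U_A^{\dg}\Lambda^{1/2}\widetilde{\cK}_B\big)$, and the Riemannian gradient is $U_A\skew(U_A^{\dg}\nabla_{U_A}\bfF)$; at a local optimum this must vanish, and symmetrically for $U_B$, giving \eqref{eq:ST-U}. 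For the $\lmd$-block the flat stationarity $\nabla_\lmd L=0$ reads componentwise $\nabla_{\lmd_j}\bfF+\eta_j-\xi(1+\log\lmd_j)-\nu=0$, using $\partial_{\lmd_j}\bfH=-(1+\log\lmd_j)$; rearranging is exactly \eqref{eq:ST-lambda-max}, and \eqref{eq:PF}--\eqref{eq:CS} are the standard accompanying feasibility and slackness relations. The minimizer case follows by replacing $\bfF$ with $-\bfF$ in $L$, which flips the signs of the $\eta_j$ and $\xi$ contributions and produces \eqref{eq:ST-lambda-min} (the sign of the equality multiplier $\nu$ being free and hence relabelled).

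The main obstacle is entirely concentrated on the boundary of the probability simplex, where two things degenerate simultaneously as $\lmd_j\to 0^+$: the entropy gradient $-(1+\log\lmd_j)$ blows up, and the objective gradient blows up as well through the $\sqrt{\lmd_j}$ factors in $\Lambda^{1/2}$. Hence the smooth stationarity equation is only literally valid where $\lmd^\star$ has full support, and a naive constraint-qualification check fails at boundary points because the active-constraint gradients are not even finite. I would resolve this by stratifying: restrict the analysis to the relative interior of the simplex face $\{\lmd:\operatorname{supp}(\lmd)=\operatorname{supp}(\lmd^\star)\}$, on which $\bfF$ and $\bfH$ are smooth and a Mangasarian--Fromovitz-type condition holds (one can always tilt within the face toward the uniform distribution to strictly relax an active entropy constraint), so that multipliers $(\nu^\star,\xi^\star,\{\eta_j^\star\}_{j\in\operatorname{supp}})$ exist and satisfy the stated conditions on the support; the off-support coordinates then carry the nonnegativity multipliers $\eta_j^\star$ and are reconciled with the complementary slackness \eqref{eq:CS}. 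Verifying that this face-wise reduction assembles into the global system \eqref{eq:ST-U}--\eqref{eq:CS} consistently—in particular that the entropy constraint may be treated smoothly on the relevant face—is where the real care is required.
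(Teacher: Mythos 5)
Your proposal is correct and takes essentially the same route as the paper, whose entire proof of this theorem is the single line ``It follows Theorem 4.1 in Yang et al.\ (2014)'': everything you spell out --- the factorwise splitting of the Riemannian gradient on the product manifold, the vanishing of $U\skew(U^\dg\nabla f)$ on the unconstrained unitary blocks, the Euclidean KKT system on the flat $\lmd$-block using $\partial_{\lmd_j}\bfH=-(1+\log\lmd_j)$, and the sign flip for the minimizer --- is exactly that citation instantiated for $\bfF$. The one place you go beyond the paper is the simplex boundary: you rightly note that when some $\lmd_j^*=0$ both the entropy gradient and the $\Lambda^{1/2}$ factors in $\nabla_\lmd\bfF$ degenerate, so the stationarity equation is only literally meaningful on the support of $\lmd^*$ and a constraint qualification must be checked face by face; the paper is silent on this, though its downstream KKT analysis in the proof of Theorem~3 implicitly adopts your reading (off-support multipliers are assigned by hand in the inactive-entropy case, and full support is established in the active-entropy case so the degeneracy never bites). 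Your stratification is thus a legitimate tightening of an argument the paper delegates entirely to the cited reference, not a different proof.
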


\begin{proof}
    It follows Theorem 4.1 in~\cite{yang2014optimality}.
\end{proof}

\subsection{Bounding the Privacy Energy}
We next derive an exact expression for the maximal privacy energy $J_{\max}(\cK_\phi,s)$ and a computable lower bound $\widetilde J_{\min}(\cK_\phi,s)$ on the minimal privacy energy $J_{\min}(\cK_\phi,s)$. Together with Theorem~\ref{thm:ldp-characterization}, these results yield a certified upper bound on the optimal privacy budget $\varepsilon^\star(s)$ and reveal how this bound depends on the entanglement entropy $s$.

We first derive a closed-form expression for $J_{\max}(\cK_{\phi},s)$ by solving the KKT conditions~\eqref{eq:ST-U}--\eqref{eq:CS}. To this end, we introduce the following notation. Since $\cE_A$ is a completely positive trace-preserving (CPTP) map, its adjoint $\cE_A^\dg$ is completely positive and unital. Consequently, the induced operator $\cK_a:=\cE_A^\dg(\cM_a)$ is positive semidefinite.
Without loss of generality, we let $\cK_a =\diag(\alpha_1,\dots,\alpha_N),$ where $\alpha_j$ denote the $j$-th largest eigenvalue of $\cK_a$, i.e., $\alpha_j=\sigma_j(\cK_a)$ and $\alpha_1\ge \alpha_2\ge\cdots\ge \alpha_N\ge 0$. Similarly, for subsystem $B$, we let $\cK_b = \diag(\beta_1,\dots,\beta_N) $, where $\beta_j:=\sigma_j(\cK_b)$ and $\beta_1\ge \beta_2\ge\cdots\ge \beta_N\ge 0$. We further define $\mu_i:=\alpha_i\beta_i$ and denote the degeneracy of the maximal eigenvalue by
\[
    d_{\max} := \#\{j:\mu_j=\mu_1 = \sigma_{\max} (\cK_{\phi})\}.
\]
The following theorem gives a closed-form expression for $J_{\max}(\mathcal K_{\phi},s)$ depending on the entropy constraint $s$ and $d_{\max}$.

\begin{theorem}[Phase-transition of Maximal Privacy Energy]\label{thm:max-product}
The value $J_{\max}(\mathcal K_{\phi},s)$ takes the following closed-form expression: 
\begin{itemize}
\item[(i)] \emph{(Low-entanglement).}
If $s \le \log d_{\max}$, then we have $J_{\max}(\mathcal K_{\phi},s) = \mu_1 = \sigma_{\max}(\mathcal K_{\phi})$, where   the optimal
solution is given by
\begin{align*}
\UAs = U_B^*&=I_{\cH},\\
\lambda_j^*&=
\begin{cases}
    \dfrac{1}{d_{\max}}, & j=1,\dots,d_{\max},\\[0.5ex]
    0, & j>d_{\max}.
\end{cases}
\end{align*}
 
\item[(ii)] \emph{(High-entanglement).}
If $s > \log d_{\max}$, then $J_{\max}(\mathcal K_{\phi},s)
=
\frac{1}{Z(\gamma)}
\sum_{j=1}^N \mu_j e^{\gamma \mu_j}
< \sigma_{\max}(\mathcal K_{\phi})$, where the optimal solution is given by the Gibbs form
\begin{align*}
\UAs=U_B^*=I_{\cH}, \quad
\lambda_j^* =\frac{\exp(\gamma\mu_j)}{Z(\gamma)}.
\end{align*}
Here, $\gamma>0$ is uniquely determined by the following entropy equation
\begin{equation}\label{eq:entropy-eq-max}
\log Z(\gamma)-\frac{\gamma}{Z(\gamma)}\sum_{j=1}^N\mu_j e^{\gamma\mu_j}=s,
\qquad Z(\gamma):=\sum_{j=1}^N e^{\gamma\mu_j}.
\end{equation}

\item[(iii)] \emph{(Strict monotonicity).}
For any $\log d_{\max}<s_1<s_2\le \log N$, we have
\[
J_{\max}(\mathcal K_{\phi},s_2)
<J_{\max}(\mathcal K_{\phi},s_1).
\]

\end{itemize}
\end{theorem}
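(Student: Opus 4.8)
The plan is to discharge the non-convex unitary geometry first, reducing the matrix optimization to a scalar convex program in the Schmidt spectrum, and then to solve that program in closed form. Recall from the parametrization that $J_{\max}(\cK_{\phi},s)=\max \bfF(\lmd,U_A,U_B)$ over $(\lmd,U_A,U_B)\in\mathbb R^N\times\UHA\times\UHB$ subject to $\lmd_j\ge 0$, $\sum_j\lmd_j=1$, and $\bfH(\lmd)\ge s$, where $\bfF=\Tr(\Lambda^{1/2}\widetilde{\cK}_A\Lambda^{1/2}\widetilde{\cK}_B)$. The first and main step is to perform the inner maximization over unitaries for a fixed feasible $\lmd$ and establish
\[
\max_{U_A\in\UHA,\ U_B\in\UHB}\ \bfF(\lmd,U_A,U_B)=\sum_{j=1}^N \lmd_j^{\downarrow}\,\mu_j,
\]
where $\lmd^{\downarrow}$ is $\lmd$ sorted decreasingly and $\mu_j=\alpha_j\beta_j$ (already decreasing since $\alpha,\beta$ are). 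I would prove this by fixing $U_A$, writing $P:=U_A\cK_a U_A^{\dg}$, and using the von Neumann trace inequality over $U_B$ to obtain $\sum_j\sigma_j(\Lambda^{1/2}P\Lambda^{1/2})\,\beta_j$; then, since the eigenvalues of the positive semidefinite product $P\Lambda$ are log-majorized (hence weakly majorized) by $(\alpha_j\lmd_j^{\downarrow})_j$, combining this with the decreasing weights $\beta_j$ (Abel summation) yields the upper bound $\sum_j\alpha_j\lmd_j^{\downarrow}\beta_j=\sum_j\lmd_j^{\downarrow}\mu_j$, with equality at $U_A=U_B=I_{\cH}$. This simultaneously certifies that the stationarity condition \eqref{eq:ST-U} with $\UAs=U_B^*=I_{\cH}$ is a genuine \emph{global} maximizer of the unitary block, not merely a critical point.

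Having reduced the problem to the scalar form $J_{\max}(\cK_{\phi},s)=\max\{\sum_j\lmd_j\mu_j:\ \lmd_j\ge 0,\ \sum_j\lmd_j=1,\ \bfH(\lmd)\ge s\}$ (taking $\lmd$ decreasing, aligned with $\mu$), I observe that the feasible set is convex: $\bfH$ is concave, so the superlevel set $\{\bfH\ge s\}$ is convex, and the simplex is convex. The objective is linear, so this is a convex program and the KKT conditions of Theorem~\ref{thm:manifold-opt-Jmax} are necessary \emph{and} sufficient for global optimality. I then split on whether the entropy constraint is active. In the low regime $s\le\log d_{\max}$, placing $\lmd$ uniformly on the $d_{\max}$ indices with $\mu_j=\mu_1$ is feasible (its entropy is $\log d_{\max}\ge s$) and attains the unconstrained maximum $\mu_1=\sigma_{\max}(\cK_{\phi})$ of the linear objective; the constraint is slack, so $\xi^*=0$, giving $J_{\max}=\mu_1$. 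In the high regime $s>\log d_{\max}$, any $\lmd$ supported on $\{\mu_j=\mu_1\}$ has entropy at most $\log d_{\max}<s$, so the constraint must bind, $\bfH(\lmd^*)=s$; solving \eqref{eq:ST-lambda-max} with $\xi^*>0$ and $\eta_j^*=0$ produces the Gibbs form $\lmd_j^*=e^{\gamma\mu_j}/Z(\gamma)$ with $\gamma=1/\xi^*>0$, and substituting into $\bfH(\lmd^*)=s$ reproduces exactly the entropy equation \eqref{eq:entropy-eq-max}, with value $\sum_j\lmd_j^*\mu_j=Z(\gamma)^{-1}\sum_j\mu_j e^{\gamma\mu_j}$.

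It remains to settle existence and uniqueness of $\gamma$ and the strict inequality. Writing $h(\gamma):=\bfH(\lmd(\gamma))=\log Z(\gamma)-\gamma\langle\mu\rangle_\gamma$ along the Gibbs family, a direct differentiation gives $h'(\gamma)=-\gamma\,\mathrm{Var}_\gamma(\mu)<0$ for $\gamma>0$ whenever the $\mu_j$ are not all equal, i.e. $d_{\max}<N$ (which is forced by $s>\log d_{\max}$ together with feasibility $s\le\log N$). Since $h(0^{+})=\log N$ and $h(\gamma)\to\log d_{\max}$ as $\gamma\to\infty$, strict monotonicity yields a unique $\gamma>0$ with $h(\gamma)=s$ for every $s\in(\log d_{\max},\log N]$. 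Because $\gamma<\infty$, the distribution $\lmd^*$ assigns positive mass to indices with $\mu_j<\mu_1$, so $\sum_j\lmd_j^*\mu_j<\mu_1=\sigma_{\max}(\cK_{\phi})$, which is the claimed strict drop. The anticipated main obstacle is the first step: proving that the identity unitaries are globally optimal for fixed $\lmd$, since this cannot be read off from the first-order condition \eqref{eq:ST-U} alone and instead relies on the log-majorization inequality for eigenvalues of products of positive semidefinite matrices; once this reduction is in place, the phase transition is an elementary consequence of the now-convex scalar program.
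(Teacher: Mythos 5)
Your proposal is correct and follows the same overall architecture as the paper's proof: first show that the unitary block is globally maximized at the identity, collapsing the problem to the linear objective $\sum_j \mu_j\lmd_j$ over the entropy-constrained simplex, then observe that this scalar program is convex and solve its KKT system, splitting on whether the entropy constraint is active. The one genuine difference lies in the key lemma used to control the spectrum of $X=\Lambda^{1/2}\widetilde{\cK}_A\Lambda^{1/2}$: you invoke log-majorization of the eigenvalues of a product of positive semidefinite matrices and pass to weak majorization of $\sigma(X)$ by $(\alpha_j\lmd_j^{\downarrow})_j$, whereas the paper derives the same partial-sum bounds $\sum_{i\le m}\sigma_i(X)\le\sum_{i\le m}\alpha_i\lmd_i$ from Ky Fan's variational principle combined with the von Neumann trace inequality and the Cauchy interlacing theorem (Lemmas~\ref{lemma:Cauchy} and~\ref{lemma:trace-ineq}); both routes then finish with the identical Abel-summation step (the paper's Lemma~\ref{lemma:majorization-dec}). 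Your route is more compact but leans on a heavier off-the-shelf matrix inequality, while the paper's is more self-contained. You are also somewhat more careful than the paper on two points: you certify directly that the identity is a \emph{global} (not merely stationary) maximizer of the unitary block, bypassing the commutation argument the paper extracts from \eqref{eq:ST-U}, and you give a complete existence-and-uniqueness argument for $\gamma$ via the strict monotonicity of $h(\gamma)=\log Z(\gamma)-\gamma\langle\mu\rangle_\gamma$ together with its limiting values $\log N$ and $\log d_{\max}$ as $\gamma\to 0^{+}$ and $\gamma\to\infty$, including the observation that $d_{\max}<N$ is forced in the high-entanglement regime.
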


The proof of Theorem~\ref{thm:max-product} is given in Appendix~\ref{app:thm:max-product}. This result reveals that  $J_{\max}(\cK_{\phi},s)$ exhibits a phase transition around the critical point $\log d_{\max}$. Specifically, when $s\le \log d_{\max}$, the entropy constraint is inactive and the optimal Schmidt coefficient concentrates uniformly on the $d_{\max}$-dimensional dominant eigenspace. In this regime, the maximal privacy energy coincides with the unconstrained optimum over the full space $\bH_0$, and hence remains constant at the spectral maximum $J_{\max}(\cK_{\phi},s)=\sigma_{\max}(\cK_{\phi})$.

In contrast, once the entropy level exceeds the threshold $s>\log d_{\max}$, the entropy constraint becomes active and forces the Schmidt weights to spread across the spectrum. The optimizer has a Gibbs-type exponential form arising from the entropy-constrained variational problem~\citep{jaynes1957information}. As a consequence, the value of $J_{\max}(\cK_{\phi},s)$ is strictly smaller than $\sigma_{\max}(\cK_{\phi})$ and becomes a smooth, monotonically decreasing function of $s$. Moreover, in the maximally entangled case, i.e., $s=\log N$, the Gibbs weights become uniform on all indices of $\lambda$ and $J_{\max}(\cK_{\phi},s)$ attains its minimal value $\bar{\mu}=\frac{1}{N}\sum_{j=1}^N \mu_j$, corresponding to the complete delocalization of spectral alignment.

Next, the following theorem provides an explicitly computable lower bound for $J_{\min}(\mathcal K_{\phi},s)$, namely $\widetilde{J}_{\min}(\mathcal K_{\phi},s)$. We denote $\widetilde{\mu}_i := \alpha_{i}\beta_{N}$, and the degeneracy of the minimal eigenvalue by
\[ d_{\min} := \#\{ j : \widetilde{\mu}_j=\widetilde{\mu}_N = \sigma_{\min}(\cK_{\phi}) \}.
\]
Then $\widetilde{J}_{\min}(\mathcal K_{\phi},s)$ presents a similar phase-transition with energy level $\{\widetilde{\mu}_j\}_{j=1}^N$.

\begin{theorem}[Phase-transition of Relaxed Minimal Privacy Energy]\label{thm:min-product}
The privacy energy $J_{\min}(\mathcal K_{\phi},s)$ admits a lower bound of $\widetilde{J}_{\min}(\mathcal K_{\phi},s)$ which exhibits the following phase transition:

\begin{itemize}
\item[(i)] \emph{(Low-entanglement).}
If $s \le \log d_{\min}$, then 
$J_{\min}(\mathcal K_{\phi},s) \ge \widetilde{J}_{\min}(\mathcal K_{\phi},s) = \widetilde\mu_N = \sigma_{\min}(\mathcal K_{\phi})$.
\item[(ii)] \emph{(High-entanglement).}
If $s > \log d_{\min}$, we have
\[
J_{\min}(\mathcal K_{\phi},s)
\ge \widetilde{J}_{\min}(\mathcal K_{\phi},s)
=\frac{1}{\widetilde Z(\widetilde\gamma)}
\sum_{j=1}^N\widetilde\mu_j e^{-\widetilde\gamma\widetilde\mu_j}
>\sigma_{\min}(\mathcal K_{\phi}),
\]
where $\widetilde\gamma>0$ is uniquely determined by the entropy equation
\begin{equation*}\label{eq:entropy-eq-clean}
\log \widetilde Z(\widetilde\gamma)
+\frac{\widetilde\gamma}{\widetilde Z(\widetilde\gamma)}
\sum_{j=1}^N\widetilde\mu_j e^{-\widetilde\gamma\widetilde\mu_j}=s,
\qquad
\widetilde Z(\widetilde\gamma):=\sum_{j=1}^N e^{-\widetilde\gamma\widetilde\mu_j}.
\end{equation*}
\item[(iii)] \emph{(Strict monotonicity).}
For any $\log d_{\min}<s_1<s_2\le \log N$, we have
\[
\widetilde J_{\min}(\mathcal K_{\phi},s_2)
>\widetilde J_{\min}(\mathcal K_{\phi},s_1).
\]
\end{itemize}
\end{theorem}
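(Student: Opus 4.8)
The plan is to obtain the lower bound $\widetilde{J}_{\min}$ by first relaxing the $B$-factor of the energy to its smallest eigenvalue and then solving the resulting entropy-constrained scalar program exactly, so that the phase transition is inherited from Theorem~\ref{thm:max-product}. Recall that under the Schmidt parametrization the privacy energy equals $\Tr(\Lambda^{1/2}\widetilde{\cK}_A\Lambda^{1/2}\widetilde{\cK}_B)$, where $\widetilde{\cK}_A = U_A\cK_{\phi,a}U_A^\dg\succeq 0$ and $\widetilde{\cK}_B = U_B\cK_{\phi,b}U_B^\dg\succeq 0$ carry eigenvalues $\{\alpha_j\}$ and $\{\beta_j\}$. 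Since $\beta_N=\sigma_{\min}(\cK_b)$ we have $\widetilde{\cK}_B\succeq\beta_N I$, and because $\Lambda^{1/2}\widetilde{\cK}_A\Lambda^{1/2}\succeq 0$ while the trace of a product of two positive semidefinite operators is nonnegative, we get $\Tr(\Lambda^{1/2}\widetilde{\cK}_A\Lambda^{1/2}\widetilde{\cK}_B)\ge\beta_N\Tr(\widetilde{\cK}_A\Lambda)$. Minimizing over all admissible states then yields $J_{\min}(\cK_{\phi},s)\ge\beta_N\min_{\lambda,U_A}\Tr(\widetilde{\cK}_A\Lambda)$, where the minimum ranges over the entropy-constrained $\lambda$ and $U_A\in\UHA$.

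Next I would evaluate the relaxed problem $\min_{\lambda,U_A}\Tr(\widetilde{\cK}_A\Lambda)$ in closed form by writing it as $\min_\lambda\min_{U_A}$. The inner minimization over $U_A$ is resolved by the von Neumann trace inequality: for fixed $\Lambda$, the minimum of $\Tr(U_A\cK_{\phi,a}U_A^\dg\Lambda)$ pairs the eigenvalues $\{\alpha_j\}$ and $\{\lambda_j\}$ in opposite order. Because the entropy constraint $\bfH(\lambda)\ge s$ is invariant under permutations of the coordinates of $\lambda$, this opposite-order pairing can be absorbed into a relabeling, so that $\min_{\lambda,U_A}\Tr(\widetilde{\cK}_A\Lambda) = \min_{\lambda:\bfH(\lambda)\ge s}\sum_{j=1}^N\alpha_j\lambda_j$. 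Multiplying by $\beta_N$ identifies the bound as the scalar entropy-constrained program $\widetilde{J}_{\min}(\cK_{\phi},s) = \min_{\lambda:\bfH(\lambda)\ge s}\sum_j\widetilde{\mu}_j\lambda_j$ with energy levels $\widetilde{\mu}_j = \alpha_j\beta_N$.

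This scalar program is exactly the minimization counterpart of Theorem~\ref{thm:max-product}: minimizing $\sum_j\widetilde{\mu}_j\lambda_j$ is equivalent to maximizing $\sum_j(-\widetilde{\mu}_j)\lambda_j$, and the maximal value of $\{-\widetilde{\mu}_j\}$ has degeneracy equal to $d_{\min}$, the multiplicity of $\widetilde{\mu}_N=\sigma_{\min}(\cK_{\phi})$. Invoking Theorem~\ref{thm:max-product} then produces the threshold at $s=\log d_{\min}$: for $s\le\log d_{\min}$ the constraint is inactive and the optimal $\lambda$ concentrates uniformly on the $d_{\min}$ minimal levels, giving $\widetilde{J}_{\min}=\widetilde{\mu}_N=\sigma_{\min}(\cK_{\phi})$, which is regime~(i); for $s>\log d_{\min}$ the constraint activates and the optimizer assumes the Gibbs form $\lambda_j\propto e^{-\widetilde{\gamma}\widetilde{\mu}_j}$, yielding a value strictly above $\sigma_{\min}(\cK_{\phi})$ that increases monotonically in $s$, which is regime~(ii).

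Finally, regime~(iii) needs a separate, tight argument because the $\beta_N$-relaxation is loose at maximal entanglement. At $s=\log N$ the only feasible distribution is the uniform one $\lambda_j=1/N$, so $\Lambda^{1/2}=N^{-1/2}I$ and the energy collapses to $\frac{1}{N}\Tr(\widetilde{\cK}_A\widetilde{\cK}_B)$; minimizing directly over $U_A,U_B$ via the von Neumann trace inequality pairs $\{\alpha_j\}$ and $\{\beta_j\}$ in opposite order and gives the attained value $J_{\min}(\cK_{\phi},\log N)=\frac{1}{N}\sum_j\alpha_j\beta_{N-j+1}$. The main obstacle is the second step: decoupling the joint minimization over the unitary $U_A$ and the entropy-constrained spectrum $\lambda$ into a one-dimensional Gibbs problem, which relies on combining the von Neumann trace inequality with the permutation-invariance of the Shannon entropy; one must also confirm that the operator relaxation of the first step preserves the threshold structure, while recognizing that it must be replaced by the exact eigenvalue pairing to obtain tightness in regime~(iii).
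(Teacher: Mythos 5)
Your proposal is correct and follows the same overall architecture as the paper's proof: relax one local factor to a scalar multiple of the identity, reduce the problem to a linear program in the Schmidt coefficients over the entropy-constrained simplex, reuse the KKT/Gibbs analysis of Theorem~\ref{thm:max-product} with the sign of the Gibbs parameter flipped, and treat $s=\log N$ by a separate exact computation with uniform $\lambda$ and the reversed von~Neumann pairing. The one substantive difference is \emph{which} factor you relax. The paper bounds $\widetilde{\cK}_A\succeq\alpha_N I$ to get $\sigma_i(X)\ge\alpha_N\lambda_i$ and then combines this with the reversed pairing against the $\beta$'s, so its relaxed objective literally reads $\alpha_N\sum_i\beta_{N-i+1}\lambda_i$, i.e.\ coefficients $\alpha_N\beta_i$ after relabeling; you instead use $\widetilde{\cK}_B\succeq\beta_N I$ at the trace level (via $\Tr(P(Q-R))\ge 0$ for $P\succeq 0$, $Q\succeq R$) and then apply von~Neumann only on the $A$ side, obtaining coefficients $\alpha_i\beta_N$. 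These are genuinely different lower bounds in general, and yours is the one that actually matches the theorem's stated definition $\widetilde\mu_i=\alpha_i\beta_N$ — the paper's appendix silently rewrites $\alpha_N\beta_{N-i+1}$ as $\widetilde\mu_i$, which is an internal inconsistency your route avoids. Your observation that the permutation-invariance of $\bfH(\lambda)$ lets you drop the opposite-order pairing, and your explicit acknowledgement that the $\beta_N$-relaxation is loose at $s=\log N$ and must be replaced by the exact eigenvalue pairing there, are both correct and are handled the same way (if less explicitly) in the paper.
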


The proof of Theorem~\ref{thm:min-product} is given in Appendix~\ref{app:thm:min-product}. The fact that we obtained a closed-form solution to $J_{\max}(\mathcal K_{\phi},s)$, but only a relaxed lower bound for $J_{\min}(\mathcal K_{\phi},s)$, is rooted in the geometry of the underlying optimization problem. In both cases, we are optimizing a linear functional over the non-convex feasible set $\bH_s$. For the maximization problem, this non-convexity is benign and the maximum can be attained at a vertex of the set, which leads to the closed-form expression in terms of the products $\alpha_i\beta_i$ and the associated Gibbs weights. In contrast, for the minimization problem, the true minimum in general is not attained at such vertices. Instead, it typically lies in the interior of the feasible set and cannot be represented solely by a simple rearrangement of eigenvalues (such as pairing $\alpha_i$ with $\beta_{N-i+1}$). As a consequence, we construct a tractable convex relaxation that yields a computable lower bound $\widetilde J_{\min}(\mathcal K_{\phi},s)$, but this relaxation does not generally coincide with the exact minimum $J_{\min}(\mathcal K_{\phi},s)$.




\subsection{The Geometry of Quantum Differential Privacy}
We now establish Theorem~\ref{thm-main} and show how the geometry of the entanglement-constrained space $\mathbb{H}_s$ leads to a phase transition in quantum differential privacy, combining the results in Theorems~\ref{thm:ldp-characterization}--\ref{thm:min-product}.
\begin{itemize}
\item[(i)] Low entanglement does not change the QLDP.

When the entanglement entropy satisfies

\[
s \le \min\{\log d_{\max}, \log d_{\min}\},
\]

the entropy constraint does not exclude the extremal eigenstates of $\cK_\phi$. In this case, the maximal and minimal privacy energies are still attained at the largest and smallest eigenvalues of $\cK_\phi$, yielding

\[
\varepsilon^*(s)
=\log\max_{\substack{\ket{\phi_a}\in \cH_A\\
\ket{\phi_b}\in \cH_B}}
\frac{\sigma_{\max}(\cK_\phi)}{\sigma_{\min}(\cK_\phi)}
= \varepsilon^*(0).
\]

Thus, the privacy leakage level coincides with the entanglement-free case, recovering the QLDP characterization in~\cite{guan2025optimal}.

\item[(ii)] High entanglement enhances the QLDP.

When the entanglement entropy exceeds the threshold,
\[
s>s_0,
\]
the entropy constraint becomes active and excludes at least one of the spectral configurations responsible for the unentangled worst-case leakage. Theorem~\ref{thm:max-product} gives an exact Gibbs-type characterization of $J_{\max}(\cK_\phi,s)$. For the minimal privacy energy, Theorem~\ref{thm:min-product} provides the computable lower bound
\[
J_{\min}(\cK_\phi,s)
\ge \widetilde J_{\min}(\cK_\phi,s).
\]
We therefore define
\[
\overline{\varepsilon}(s)
:=
\log\max_{\substack{\ket{\phi_a}\in\cH_A\\
                      \ket{\phi_b}\in\cH_B}}
\frac{J_{\max}(\cK_\phi,s)}
     {\widetilde J_{\min}(\cK_\phi,s)}.
\]
Using Theorem~\ref{thm:ldp-characterization} and the lower bound on $J_{\min}$, we obtain
\[
\begin{aligned}
\varepsilon^\star(s)
&=
\log\max_{\substack{\ket{\phi_a}\in\cH_A\\
                      \ket{\phi_b}\in\cH_B}}
\frac{J_{\max}(\cK_\phi,s)}
     {J_{\min}(\cK_\phi,s)}\\
&\le
\log\max_{\substack{\ket{\phi_a}\in\cH_A\\
                      \ket{\phi_b}\in\cH_B}}
\frac{J_{\max}(\cK_\phi,s)}
     {\widetilde J_{\min}(\cK_\phi,s)}\\
&=\overline{\varepsilon}(s) < \varepsilon^\star(0) .
\end{aligned}
\]
Moreover, in the active high-entanglement regime, the energy $J_{\max}(\cK_\phi,s)$ decreases while the relaxed energy $\widetilde J_{\min}(\cK_\phi,s)$ increases. Consequently, the certified upper bound $\overline{\varepsilon}(s)$ is strictly decreasing for any $s>s_0$. Thus, increasing entanglement enhances privacy, as certified by an explicitly computable upper bound that decreases strictly with the entanglement entropy.

In particular, consider the case where $\cK_a=\cE_A^{\dg}(\ket{\phi_a}\bra{\phi_a})$ has a vanishing smallest eigenvalue, $\alpha_N=0$, while $\cK_b=\cE_B^{\dg}(\ket{\phi_b}\bra{\phi_b})$ satisfies $\beta_N>0$. In the unentangled regime, $\sigma_{\min}(\cK_\phi)=0$, yielding infinite privacy leakage. Under a sufficiently strong entanglement constraint, the Gibbs weights defining $\widetilde J_{\min}(\cK_\phi,s)$ are strictly positive. Provided that not all coefficients $\widetilde\mu_j=\alpha_j\beta_N$ vanish, we have
\[
J_{\min}(\cK_\phi,s)
\ge\widetilde J_{\min}(\cK_\phi,s)
>0.
\]
More generally, whenever this positivity holds uniformly over the local measurements, $\overline{\varepsilon}(s)<+\infty$, and therefore
\[
\varepsilon^\star(s)
\le\overline{\varepsilon}(s)
<+\infty.
\]
Consequently, entanglement can render an otherwise non-private mechanism private over the entanglement-constrained domain.
\end{itemize}

The above discussion highlights the central role of entanglement, or non-classical quantum correlation, in the differential privacy of quantum computing methods. Weak entanglement does not affect the achievable privacy level, whereas sufficiently strong entanglement yields a strict improvement over the entanglement-free privacy level and provides a progressively decreasing certified upper bound. Entanglement can even turn some non-private mechanisms into private ones. This result stands in sharp contrast to classically correlated computing settings, where correlations typically degrade privacy guarantees. Our results show that, rather than being detrimental, quantum entanglement can serve as an intrinsic resource for privacy protection.

In addition, the certified dependence of privacy on entanglement exhibits nonlinear behavior. In contrast to prior QLDP results~\citep{guan2025optimal}, which are governed by linear spectral quantities such as maximal and minimal eigenvalues, the nonlinearity here arises from the nonconvex geometry of the entanglement-constrained set $\bH_s$. We parametrize this set using the Schmidt decomposition and analyze the resulting optimization problem through Riemannian optimization. This geometric perspective potentially paves the way for a deeper understanding of the role of entanglement in quantum information, quantum computing, and privacy protection.

\section{Numerical Experiments}\label{sec:exp}

To illustrate our theoretical results, we conduct two numerical experiments in this section. The first experiment demonstrates the phase-transition phenomenon as characterized in Theorem~\ref{thm-main}, and the second one shows that entanglement can transform a mechanism that is not private into one that satisfies differential privacy. Throughout the section, we set $\mathcal H_{AB}$ as a $4$-qubit bipartite quantum system, where each subsystem consists of two qubits, i.e., $\mathcal H_A\cong(\mathbb C^2)^{\otimes 2}$ and $\mathcal H_B\cong(\mathbb C^2)^{\otimes 2}$.

\subsection{Two-qubit Block-depolarizing Channel}
We consider the following mechanism on both $\cH_A$ and $\cH_B$ 
\begin{equation}\label{eq:Ebeta}
\mathcal N_\beta(\rho)
=(1-\beta)\Tr(P_0\rho)\frac{P_0}{2}
+(1-\beta)\Tr(P_1\rho)\frac{P_1}{2}
+\beta\frac{I_4}{4},\qquad \beta\in[0,1].
\end{equation}
where $P_0 := \ket{00}\!\bra{00} + \ket{11}\!\bra{11}$ is the orthogonal projector onto the even subspace of $(\mathbb C^2)^{\otimes 2}$ and $P_1 := I_4 - P_0$ is the projector onto the odd subspace. By construction, $\mathcal N_\beta$ is a CPTP map.

For a rank-one  POVM element $\cM=\ket{\psi}\!\bra{\psi}$, the adjoint of $\mathcal N_\beta$ admits the explicit form
\[
\mathcal N_\beta^\dagger(\cM)
=(1-\beta)\Bigl(\frac{t}{2}P_0+\frac{1-t}{2}P_1\Bigr)
+\beta\frac{\Tr(\cM)}{4}I_4.
\]
where the weight $t = \Tr(P_0 \cM)=\langle\psi|P_0|\psi\rangle \in [0,1]$. Since both $P_0$ and $P_1$ have rank $2$, the spectrum of $\mathcal N_\beta^\dagger(\cM)$ consists of two identical maximal eigenvalues and two minimal eigenvalues
\begin{align}\label{eq:spec-depor}
\mathrm{spec}\bigl(\mathcal N_\beta^\dagger(\cM)\bigr)
= \{ v(t), v(t), w(t), w(t)\},
\end{align}
where $v(t) = (1-\beta)\frac{t}{2}+\frac{\beta}{4}$  and $w(t) = 1/2-v(t)$.
\subsection{Phase Transition of the QLDP Leakage Level}
To show the phase-transition in Theorem~\ref{thm-main}, we consider a symmetric local quantum mechanism $\cE_A=\cE_B= \mathcal N_\beta$ with $\beta=\tfrac12$, and evaluate the optimal ECLM-QLDP leakage level $\varepsilon^*$ as a function of the entanglement entropy $s$. 

From~\eqref{eq:spec-depor}, we know that the mechanism $\cE =\mathcal N_\beta \otimes \mathcal N_\beta $ consists of two degenerate maximal eigenvalues and two degenerate minimal eigenvalues for every POVM element $\cM=\ket{\phi}\!\bra{\phi}$. Consequently, the phase-transition threshold is $s=\log 2$. Considering separately the regimes $s\le\log 2$ and $s>\log 2$, we obtain the following certified upper bound on the optimal ECLM-QLDP leakage level.
\begin{proposition}\label{prop:qldp-exp-1}
According to Theorems~\ref{thm:max-product} and~\ref{thm:min-product}, the optimal ECLM-QLDP leakage level $\varepsilon^\star(s)$ of $\cE$ satisfies
\begin{equation}
\begin{aligned}
\varepsilon^\star(s)&\le \overline{\varepsilon}(s),\\
\overline{\varepsilon}(s)&:=
\begin{cases}
2\log 3,
& s\le \log 2,\\[0.5ex]
\displaystyle\log\frac{8\tau+1}{3-2\tau},
& s>\log 2.
\end{cases}
\end{aligned}
\end{equation}
where $\tau\in[1/2,1]$ is uniquely determined by
\[
s=\log 2-\tau\log\tau-(1-\tau)\log(1-\tau),
\]
and decreases monotonically as the entanglement entropy increases.
\end{proposition}

\begin{figure}[t]
        \centering
        \subfloat[Phase transition of the certified QLDP leakage upper bound and the numerically computed optimal QLDP leakage level.\label{fig:QLDPvsEntan-1}]{
                \includegraphics[width=0.48\linewidth]{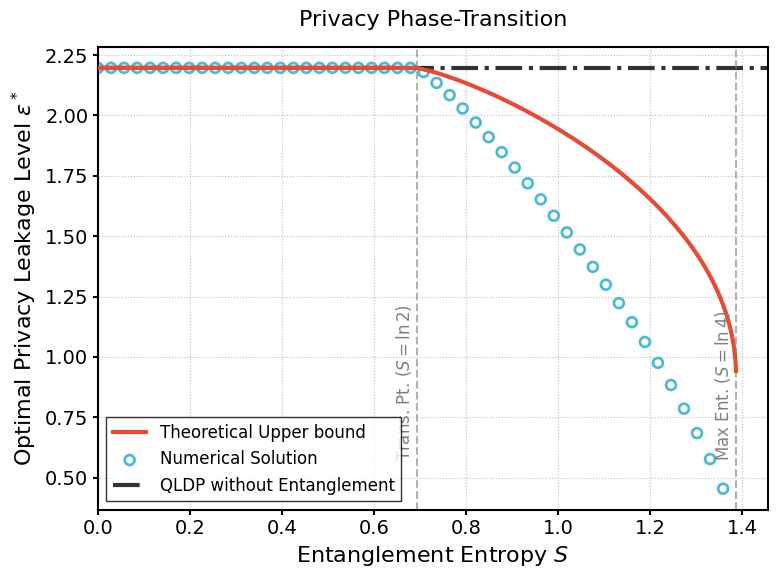}
        }
        \hfill
        \subfloat[Finite QLDP leakage achieved via entanglement for a non-private mechanism.\label{fig:QLDPvsEntan}]{
                \includegraphics[width=0.48\linewidth]{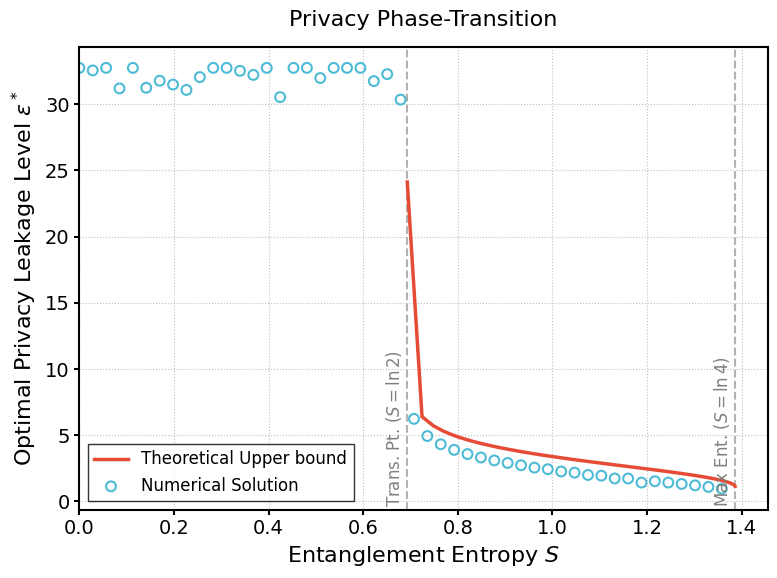}
        }
        \caption{Entanglement-dependent QLDP leakage.}
\end{figure}

We plot the certified QLDP leakage upper bound $\overline{\varepsilon}(s)$ derived in Proposition~\ref{prop:qldp-exp-1} as a function of the entanglement entropy $s$ in Figure~\ref{fig:QLDPvsEntan-1}. We also report the numerically computed optimal leakage level $\varepsilon^\star(s)$ obtained by solving the Riemannian optimization problem in Theorem~\ref{thm:manifold-opt-Jmax}. In this example, both curves exhibit the same qualitative phase-transition behavior. Below the critical threshold $s=\log 2$, the certified upper bound coincides with the unentangled privacy level, and the numerically computed optimal leakage level remains unchanged. Above the threshold, the certified upper bound strictly decreases with $s$, while the numerically computed optimal leakage level displays the same decreasing trend and attains its smallest observed value at $s=\log 4$. The numerical results are therefore consistent with the certified privacy improvement established by the theoretical upper bound.

\subsection{Entanglement-Induced Privatization of Non-Private Mechanisms}

In this subsection, we show that entanglement can convert a non-private local mechanism into a private one. We consider asymmetric local mechanisms given by $\cE'_A=\mathcal N_0$ and $\cE'_B=\mathcal N_{1/2}$. For the measurement operator $\cM_a=\ket{00}\!\bra{00}$, the adjoint action yields
\begin{align*}
\mathrm{spec}\!\bigl(\mathcal N_0^\dagger(\cM_a)\bigr)
=\left\{\tfrac12,\tfrac12,0,0\right\}.
\end{align*}
In particular, $\mathcal N_0^\dagger(\cM_a)$ admits a zero eigenvalue, implying that the local mechanism $\cE'_A$ is not QLDP in the entanglement-free domain~\citep{guan2025optimal}, where the corresponding leakage level is $\varepsilon_a^*(0)=+\infty$. The same conclusion holds for the product mechanism $\cE'=\cE'_A\otimes\cE'_B$, and therefore $\cE'$ is not QLDP in unentangled cases.

We now compute an upper bound on the ECLM-QLDP leakage level of $\cE'=\cE'_A\otimes\cE'_B$ as a function of the entanglement entropy $s$. By applying Theorems~\ref{thm:max-product} and~\ref{thm:min-product}, and following the same derivation as in Proposition~\ref{prop:qldp-exp-1}, we obtain the corresponding upper bound for the leakage level $\varepsilon^*(s)$ in the following proposition.

\begin{proposition}\label{prop:qldp-exp-2}
According to Theorems~\ref{thm:max-product} and~\ref{thm:min-product}, the optimal ECLM-QLDP leakage level $\varepsilon^*(s)$ of $\cE'$ has the following upper bound
\begin{align}
    \nonumber \varepsilon^\star(s)&\le \overline{\varepsilon}(s),\\
    \overline{\varepsilon}(s)&:=
    \begin{cases}
        \infty, &\quad s\le \log2 ;\\
        \log \frac{3\tau}{1-\tau}, & \quad s>\log2.
    \end{cases}
\end{align}
Here, $\tau \in [1/2,1]$ is uniquely determined by the equation $s=\log 2-\tau\log\tau-(1-\tau)\log(1-\tau)$ and decreases monotonically as the entanglement entropy increases.
\end{proposition}

The results show that, below the phase-transition threshold, the mechanism remains non-private, matching its behavior in the unentangled setting. Once the entanglement entropy exceeds the threshold, Proposition~\ref{prop:qldp-exp-2} provides a finite upper bound on $\varepsilon^\star(s)$, which certifies that the mechanism has finite QLDP leakage. Figure~\ref{fig:QLDPvsEntan} compares this certified upper bound with the numerically computed optimal leakage level. The numerical results are consistent with the theoretical guarantee and illustrate how a sufficiently strong entanglement constraint can transform a non-private mechanism into a private one.

\section{Conclusion}\label{sec:con}
In this work, we established a quantitative relationship between quantum entanglement and quantum local differential privacy. We first considered a quantum processing model with entangled inputs, where both the mechanism and measurements were restricted to local operations to avoid additional entanglement. Within this framework, we then defined entanglement-constrained $\varepsilon$-QLDP against local measurement adversaries (EC-LM-$\varepsilon$-QLDP). By capturing the nonconvex geometry of the entanglement-constrained set $\bH_s$ via manifold parametrization and using the KKT conditions in Riemannian optimization, we characterized the dependence of EC-LM-$\varepsilon$-QLDP on the entanglement entropy, and obtained an explicit bound for the QLDP leakage level. Finally, several numerical experiments illustrated our theoretical results. Our central finding is a phase-transition phenomenon of QLDP with entanglement: insufficient entanglement leaves the QLDP leakage level unchanged, whereas sufficiently strong entanglement makes the exact leakage strictly smaller than its unentangled value and yields an explicitly computable upper bound that decreases strictly with the entanglement entropy. This bound provides progressively stronger certified privacy guarantees and can certify that some non-private mechanisms become private under sufficiently strong entanglement constraints.

Future work will investigate the connection between quantum non-locality, the Heisenberg uncertainty principle, and quantum differential privacy. Another future direction we aim to develop is robust privacy-preserving protocols that exploit entanglement to enhance privacy guarantees in quantum communication and quantum machine learning.

\section*{Author contribution statements}
G. Shi and P. Sadeghi conceived the original research problem. X. Wang developed the theoretical analysis and derived the main results. G. Shi and P. Sadeghi verified the theoretical derivations and validated the computational results. G. Shi supervised the project and reviewed the manuscript. 

A large language model (LLM) was used solely to assist with grammar correction and language editing of the manuscript. The authors reviewed and edited all AI-generated suggestions and take full responsibility for the content of the manuscript.

\bibliographystyle{quantum}
\bibliography{ref}

\onecolumn
\appendix
\newpage

\section{Notation Table}

For convenience, we summarize the key notations used throughout this paper in Table \ref{tab:notation}.
\begin{table}[h!]
\centering
\renewcommand{\arraystretch}{1.3} 
\begin{tabular}{p{0.2\textwidth} p{0.75\textwidth}}
\hline
\textbf{Symbol} & \textbf{Description} \\
\hline
\multicolumn{2}{l}{\textit{Spaces and Basic Operators}} \\
$\mathcal{H}$ & Hilbert space associated with a quantum system \\
$\mathcal{H}_A, \mathcal{H}_B$ & Local Hilbert spaces for subsystems A and B \\
$N$ & Dimension of each subsystem \\
$\bB(\mathcal{H})$ & Set of linear operators acting on $\mathcal{H}$ \\
$\bD(\mathcal{H})$ & Set of density operators on $\mathcal{H}$ \\
$I_{\cH}$ & Identity operator on $\cH$ \\
$\mathrm{Tr}(\cdot)$ & Trace operation \\
\hline
\multicolumn{2}{l}{\textit{Quantum States and Entanglement}} \\
$\ket{\psi}$ & Pure quantum state vector \\
$\rho, \rho'$ & Density matrices \\
$\bfE(\psi)$ & Entanglement entropy of a pure state $\ket{\psi}$ \\
$\bfS(\rho)$ & Von Neumann entropy of state $\rho$ \\
$\lambda_i$ & Schmidt coefficients of a bipartite pure state \\
$\mathbb{H}_s$ & Set of entanglement-constrained states where $\bfE(\psi) \ge s$ \\
\hline
\multicolumn{2}{l}{\textit{Privacy and Channels}} \\
$\cE$ & Quantum channel (Completely Positive Trace-Preserving map) \\
$\cE^\dagger$ & Adjoint of a quantum channel  \\
$\cE_A, \cE_B$ & Local quantum channels acting on subsystems $A$ and $B$ \\
$\cM_a, \cM_b$ & Local quantum measurements on subsystems $A$ and $B$ \\
$\varepsilon$ & Differential privacy budget \\
$\mathcal{K}_\phi$ & Privacy Hamiltonian induced by the mechanism \\
$J_{\max}, J_{\min}$ & Maximal and minimal privacy energy \\
\hline
\multicolumn{2}{l}{\textit{Others}} \\
$\mathrm{U}_{\cH}$ & Unitary group on Hilbert space $\cH$ \\
$\mathrm{grad} f$ & Riemannian gradient of function $f$ \\
\hline
\end{tabular}
\caption{Summary of notation used in this paper.}
\label{tab:notation}
\end{table}
\section{Technical Lemmas}
In this appendix, we state some technical lemmas for use in the subsequent proofs.

\begin{lemma}[Cauchy interlacing theorem, \cite{bhatia2013matrix}]\label{lemma:Cauchy}
Let $X\in\mathbb{C}^{N\times N}$ be Hermitian with eigenvalues $\sigma_1(X)\ge \cdots \ge \sigma_N(X)\ge 0$ and $P\in\mathbb{C}^{N\times M}$ satisfy $P^\dg P=I_M$.
Then the eigenvalues
\[
\sigma_j(P^\dg X P) \le \sigma_j(X), \qquad j=1,\dots,M.
\]
\end{lemma}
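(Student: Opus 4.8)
The plan is to derive the stated one-sided bound from the Courant--Fischer min-max characterization of the eigenvalues of a Hermitian matrix. Recall that for any Hermitian $X\in\mathbb{C}^{N\times N}$ with eigenvalues $\sigma_1(X)\ge\cdots\ge\sigma_N(X)$, the $j$-th eigenvalue admits the variational formula
\[
\sigma_j(X)=\max_{\substack{S\subseteq\mathbb{C}^N\\ \dim S=j}}\ \min_{\substack{v\in S\\ \|v\|=1}}\ \langle v, X v\rangle,
\]
where the maximum ranges over all $j$-dimensional subspaces $S$ of $\mathbb{C}^N$. Since $Y:=P^\dg X P\in\mathbb{C}^{M\times M}$ is again Hermitian, the same formula applies to $Y$ with $\mathbb{C}^M$ in place of $\mathbb{C}^N$.

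First I would reduce the Rayleigh quotients of $Y$ to those of $X$. For any unit vector $w\in\mathbb{C}^M$, the isometry condition $P^\dg P=I_M$ gives $\|Pw\|^2=\langle w, P^\dg P w\rangle=\|w\|^2=1$, and moreover
\[
\langle w, Y w\rangle=\langle w, P^\dg X P w\rangle=\langle Pw, X\,Pw\rangle .
\]
Thus evaluating the Rayleigh quotient of $Y$ at $w$ coincides with evaluating the Rayleigh quotient of $X$ at the unit vector $Pw\in\mathbb{C}^N$.

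Next I would track how subspaces transform under $P$. Because $P^\dg P=I_M$ the map $w\mapsto Pw$ is injective, so for every $j$-dimensional subspace $S\subseteq\mathbb{C}^M$ the image $PS$ is a $j$-dimensional subspace of $\mathbb{C}^N$, and by the previous identity the correspondence preserves Rayleigh quotients. Substituting into the min-max formula for $Y$ yields
\[
\sigma_j(Y)=\max_{\substack{S\subseteq\mathbb{C}^M\\ \dim S=j}}\ \min_{\substack{u\in PS\\ \|u\|=1}}\ \langle u, X u\rangle .
\]
As $S$ ranges over all $j$-dimensional subspaces of $\mathbb{C}^M$, the images $PS$ form a sub-collection of the $j$-dimensional subspaces of $\mathbb{C}^N$ (precisely those contained in the range of $P$). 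Maximizing the inner quantity over this smaller family can only decrease the value, whence
\[
\sigma_j(Y)\le\max_{\substack{T\subseteq\mathbb{C}^N\\ \dim T=j}}\ \min_{\substack{u\in T\\ \|u\|=1}}\ \langle u, X u\rangle=\sigma_j(X),
\]
which is the claimed inequality for $j=1,\dots,M$.

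The argument is essentially complete once these pieces are assembled; the only point requiring care is verifying that $P$ maps $j$-dimensional subspaces to $j$-dimensional subspaces and preserves Rayleigh quotients, both of which follow directly from $P^\dg P=I_M$. I do not expect a genuine obstacle, since only the upper bound is asserted: the full interlacing statement would additionally invoke the dual min-max formula to produce the matching lower bound $\sigma_j(Y)\ge\sigma_{j+N-M}(X)$, but that direction is not needed here.
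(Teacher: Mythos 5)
Your proof is correct. The paper does not prove this lemma at all --- it is stated as a known result with a citation to Bhatia's \emph{Matrix Analysis} --- and the argument you give via the Courant--Fischer min-max characterization is precisely the standard textbook proof that the citation points to: the key observations (that $P$ is an isometry, hence preserves Rayleigh quotients and maps $j$-dimensional subspaces of $\mathbb{C}^M$ to $j$-dimensional subspaces of $\mathbb{C}^N$, so the maximization for $P^\dg XP$ runs over a subfamily of that for $X$) are exactly right, and you correctly note that only the upper bound is needed here rather than the full two-sided interlacing.
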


\begin{lemma}[von Neumann trace inequality, \cite{bhatia2013matrix}]\label{lemma:trace-ineq}
Let $X,Y\in\mathbb{C}^{N\times N}$ be positive semidefinite matrices. Then
\[
 \sum_{j=1}^N \sigma_j(X)\,\sigma_{N+1-j}(Y) \le \Tr(X Y) \le \sum_{i=1}^N \sigma_i(X)\,\sigma_i(Y),
\]
where $\sigma_1(\cdot)\ge \cdots \ge \sigma_N(\cdot)\ge 0$ denote eigenvalues arranged in nonincreasing order.
\end{lemma}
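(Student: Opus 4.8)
The plan is to reduce the bilinear trace $\Tr(XY)$ to a linear functional over the set of doubly stochastic matrices and then invoke the Birkhoff--von Neumann theorem together with the rearrangement inequality. First I would use the spectral theorem to write $X = U D_X U^\dg$ and $Y = V D_Y V^\dg$, where $D_X = \diag(\sigma_1(X),\ldots,\sigma_N(X))$ and $D_Y = \diag(\sigma_1(Y),\ldots,\sigma_N(Y))$ collect the eigenvalues in nonincreasing order and $U,V$ are unitary. Setting $W := U^\dg V$, which is unitary, cyclicity of the trace gives $\Tr(XY) = \Tr(D_X W D_Y W^\dg)$, and expanding the diagonal matrices entrywise yields $\Tr(XY) = \sum_{i,j=1}^N \sigma_i(X)\,\sigma_j(Y)\,|W_{ij}|^2$.

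Second, I would observe that the matrix $P$ with entries $P_{ij} := |W_{ij}|^2$ is doubly stochastic: unitarity of $W$ forces every row and every column of $W$ to be a unit vector, so $\sum_j P_{ij} = \sum_i P_{ij} = 1$ with $P_{ij}\ge 0$. Hence $\Tr(XY)$ is the value at $P$ of the linear functional $Q \mapsto \sum_{i,j}\sigma_i(X)\sigma_j(Y)\,Q_{ij}$ on the Birkhoff polytope of doubly stochastic matrices. Because this functional is linear and the polytope is compact and convex, its maximum and minimum are attained at extreme points, which by the Birkhoff--von Neumann theorem are exactly the permutation matrices. Therefore
\[
\min_\pi \sum_{i=1}^N \sigma_i(X)\,\sigma_{\pi(i)}(Y) \;\le\; \Tr(XY) \;\le\; \max_\pi \sum_{i=1}^N \sigma_i(X)\,\sigma_{\pi(i)}(Y),
\]
where $\pi$ ranges over permutations of $\{1,\ldots,N\}$.

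Third, I would evaluate these two permutation optima via the rearrangement inequality. Since both eigenvalue sequences are sorted in nonincreasing order, $\sum_i \sigma_i(X)\sigma_{\pi(i)}(Y)$ is maximized by the identity permutation, giving the upper bound $\sum_j \sigma_j(X)\sigma_j(Y)$, and minimized by the order-reversing permutation $\pi(i)=N+1-i$, giving the lower bound $\sum_j \sigma_j(X)\sigma_{N+1-j}(Y)$. Combining this with the previous display produces exactly the claimed two-sided inequality. I note that positive semidefiniteness of $X$ and $Y$ is not essential to the argument—the same proof applies to arbitrary Hermitian matrices—but it guarantees $\sigma_j(\cdot)\ge 0$ as stated in the lemma.

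The main obstacle is the middle step: rigorously replacing the continuous optimization over the unitary $W$ by a discrete optimization over permutation matrices. The cleanest route is the one above, namely passing to the image $P_{ij}=|W_{ij}|^2$ and embedding it into the full Birkhoff polytope using only double stochasticity (rather than the finer unistochastic structure that the $P$ actually satisfy), so that linearity plus Birkhoff--von Neumann applies directly. An alternative that sidesteps Birkhoff is a direct Abel-summation argument on the partial sums $\sum_{k\le m}\sigma_k(\cdot)$ together with the majorization relations implied by double stochasticity, but this is more computational; I would prefer the polytope argument for transparency.
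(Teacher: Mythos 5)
Your proposal is correct and complete. For comparison purposes, note that the paper does not actually prove this lemma: it is stated in the appendix of technical lemmas with a citation to Bhatia's \emph{Matrix Analysis} and used as a black box (in Step~2 of the proof of Theorem~\ref{thm:max-product} and in the proof of Theorem~\ref{thm:min-product}), so there is no internal argument to measure yours against. Your route is the standard self-contained derivation: diagonalize $X=UD_XU^\dg$ and $Y=VD_YV^\dg$, reduce to $\Tr(XY)=\sum_{i,j}\sigma_i(X)\sigma_j(Y)\,|W_{ij}|^2$ with $W=U^\dg V$, and bound the resulting linear functional over the Birkhoff polytope. The one genuinely delicate point --- that the matrices $(|W_{ij}|^2)$ form only the unistochastic subset of the doubly stochastic matrices, so optimizing over the full polytope yields bounds rather than exact values of the unitary optimization --- you identify and handle correctly, since the lemma asserts only the two-sided inequality; the rearrangement step evaluating the permutation extrema at the identity and the order-reversing permutation is likewise sound. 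Your closing remark is also accurate: positive semidefiniteness is not needed for the argument (Hermitian suffices, as the rearrangement inequality only requires real sequences), and in the paper it serves merely to guarantee nonnegative spectra in the downstream applications. In short, your proof would be a legitimate drop-in replacement for the paper's citation.
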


\begin{lemma}[Majorization bound for weighted sums]\label{lemma:majorization-dec}
Let $x=(x_1,\dots,x_N)$ and $y=(y_1,\dots,y_N)$ be nonnegative vectors with
$x_1\ge \cdots \ge x_N$ and $y_1\ge \cdots \ge y_N$.
If $z=(z_1,\dots,z_N)$ is another vector satisfying
\[
\sum_{j=1}^M z_j \le \sum_{j=1}^M x_j
\quad \text{for all } M=1,\dots,N,
\]
then
\[
\sum_{j=1}^N y_j z_j \le \sum_{j=1}^N y_j x_j.
\]
\end{lemma}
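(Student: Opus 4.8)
The plan is to reduce the claim to the nonnegativity of a single weighted sum and then dispatch it by summation by parts (Abel summation), which is the natural tool here because the hypothesis is a \emph{prefix-sum domination} rather than full majorization. Concretely, I would set $d_j := x_j - z_j$ and introduce the partial sums $D_M := \sum_{j=1}^M d_j = \sum_{j=1}^M x_j - \sum_{j=1}^M z_j$. The hypothesis $\sum_{j=1}^M z_j \le \sum_{j=1}^M x_j$ for every $M$ is then precisely the assertion that $D_M \ge 0$ for all $M=1,\dots,N$. Since the target inequality $\sum_{j=1}^N y_j z_j \le \sum_{j=1}^N y_j x_j$ is equivalent to $\sum_{j=1}^N y_j d_j \ge 0$, it suffices to establish this last inequality.

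Next I would apply Abel summation with the convention $D_0 := 0$. Writing $d_j = D_j - D_{j-1}$ and regrouping the resulting telescoping sum gives the identity
\[
\sum_{j=1}^N y_j d_j = y_N D_N + \sum_{j=1}^{N-1} (y_j - y_{j+1})\, D_j.
\]
Every ingredient on the right-hand side is manifestly nonnegative: $y_N \ge 0$ (nonnegativity of $y$) and $D_N \ge 0$ (hypothesis) give $y_N D_N \ge 0$; the ordering $y_1 \ge \cdots \ge y_N$ yields $y_j - y_{j+1} \ge 0$, while $D_j \ge 0$ for each $j$, so every summand $(y_j - y_{j+1}) D_j$ is nonnegative. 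Summing therefore gives $\sum_{j=1}^N y_j d_j \ge 0$, which is exactly the claim.

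There is no serious obstacle in this argument; the only points requiring care are verifying the summation-by-parts identity and correctly tracking the boundary term $y_N D_N$ (its counterpart $y_1 D_0$ vanishes under the convention $D_0 = 0$). It is worth emphasizing \emph{why} the weaker hypothesis suffices here: unlike classical majorization, we assume neither that $z$ is sorted nor that $\sum_j z_j = \sum_j x_j$, so the Hardy--Littlewood--P\'olya characterization does not apply directly. The Abel-summation route uses only the prefix bounds $D_M \ge 0$ together with the nonincreasing, nonnegative weights $y$, which is exactly the structure available in the downstream application to the relaxed lower bound on $J_{\min}$.
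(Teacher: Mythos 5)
Your proof is correct and follows essentially the same route as the paper: both define the partial sums $S_M=\sum_{j=1}^M(x_j-z_j)\ge 0$, apply Abel summation to obtain $\sum_j y_j(x_j-z_j)=y_N S_N+\sum_{j=1}^{N-1}(y_j-y_{j+1})S_j$, and conclude from the nonnegativity and monotonicity of $y$. No differences worth noting.
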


\begin{proof}
Define the partial sums $S_m:=\sum_{j=1}^m(x_j-z_j)$ for $m=1,\ldots,N$. By assumption, $S_M \ge 0$ for all $M$. Using summation by parts, we write
\begin{align*}
\sum_{j=1}^N y_j (x_j - z_j) = y_NS_N+ \sum_{j=1}^{N-1} (y_j - y_{j+1})S_j,
\end{align*}
Since $y_1 \ge y_2 \ge \cdots \ge y_N \ge 0$, we have $y_j - y_{j+1} \ge 0$ for all $j$. Moreover, by definition of $S_j$, each partial sum satisfies $S_i \ge 0$. Therefore, every term in the above sum is nonnegative, and hence
\[
\sum_{j=1}^N y_j(x_j - z_j) \ge 0.
\]
which is equivalent to $\sum_{j=1}^N y_j z_j \le \sum_{j=1}^N y_j x_j$.
\end{proof}

\section{Proof of Theorem~\ref{thm:ldp-characterization}}\label{app:thm:ldp-characterization}
For any local measurement $\{ \cM_a \otimes \cM_b \}_{(a,b)\in O_A \times O_B}$, any subset \( T \subseteq O_A \times O_B \), and state \(\rho\), we have
\begin{align*}
\Pr\Big[ \cM(\mathcal E(\rho)) \in T \Big] &= \sum_{(a,b)\in T}
\Tr \big( (\cM_a \otimes \cM_b) \mathcal E(\rho) \big) \\
&= \sum_{(a,b)\in T} \Tr\left( (\mathcal E_A^\dagger \otimes \mathcal E_B^\dagger)
(\cM_a \otimes \cM_b) \rho \right).
\end{align*}

Since each POVM \(\cM_a \succeq 0\) is positive semidefinite on \(\cH_A\), it admits a spectral decomposition
\begin{align*}
    \cM_a =\sum_{j=1}^{N} \eta_{a,j} \ket{\phi_{a,j}}\bra{\phi_{a,j}}.
\end{align*}
where \(\{\ket{\phi_{a,j}}\}_{j=1}^{N}\subset \cH_A\) is an orthonormal basis and
\(\eta_{a,j}>0\). Similarly, we have
\begin{align*}
\cM_b =\sum_{k=1}^{N} \eta_{b,k} \ket{\phi_{b,k}}\bra{\phi_{b,k}}, \quad \eta_{b,k}>0.
\end{align*}
Denoting $\ket{\phi_{j,k}} =  \ket{\phi_{a,j}} \otimes \ket{\phi_{b,k}}$, we obtain
\begin{align}\label{eq:app2-sum-tr}
\Pr\big[ M(\mathcal E(\rho)) \in T \big]
&= \sum_{(a,b)\in T}\sum_{j=1}^{N}\sum_{k=1}^{N}\eta_{a,j}\eta_{b,k}
\Tr\Big( (\cE_{A}^\dagger \otimes \cE_{B}^\dagger) 
\bigl( \ket{\phi_{a,j}}\bra{\phi_{a,j}} \otimes \ket{\phi_{b,k}}\bra{\phi_{b,k}} \bigr)
\rho
\Big) \nonumber \\ 
& = \sum_{(a,b)\in T}\sum_{j=1}^{N}\sum_{k=1}^{N}\eta_{a,j}\eta_{b,k}
\Tr\big( \cK_{\phi_{j,k}} \rho \big).
\end{align}

Let $\rho=\ket{\psi}\bra{\psi}$ and $\rho'=\ket{\psi'}\bra{\psi'}$ be any two pure
entangled states in $\bH_s$. Define 
\begin{align*}
    Q(s):= &\max_{\rho,\rho'\in \bH_s}\Bigg\{\Pr\big[ \cM(\mathcal E(\rho)) \in T \big] - e^{\varepsilon} \Pr\big[ \cM(\mathcal E(\rho')) \in T \big] \Bigg\} 
\end{align*}
Using~\eqref{eq:app2-sum-tr} and linearity, we obtain
\begin{align*}
    Q(s) \le  &\sum_{(a,b)\in T}\sum_{j=1}^{N}\sum_{k=1}^{N}\eta_{a,j}\eta_{b,k}
    \max_{\rho, \rho' \in \bH_s}\Big\{\Tr\big( \cK_{\phi_{j,k}} \rho \big) - e^{\varepsilon} \Tr\big( \cK_{\phi_{j,k}} \rho'\big)\Big\} 
    \\
    \le  &\sum_{(a,b)\in T}\sum_{j=1}^{N}\sum_{k=1}^{N}\eta_{a,j}\eta_{b,k} 
    \Big\{\max_{\rho \in \bH_s}\Tr\big( \cK_{\phi_{j,k}} \rho \big) - e^{\varepsilon} \min_{\rho' \in \bH_s} \Tr\big( \cK_{\phi_{j,k}} \rho'\big)\Big\}
    \\
    \le  &\sum_{(a,b)\in T}\sum_{j=1}^{N}\sum_{k=1}^{N}\eta_{a,j}\eta_{b,k} 
    \Big\{\max_{\ket{\psi} \in \bH_s} \bra{\psi}\cK_{\phi_{j,k}} \ket{\psi} - e^{\varepsilon} \min_{\ket{\psi'} \in \bH_s} \bra{\psi'} \cK_{\phi_{j,k}} \ket{\psi'}\Big\}
\end{align*}
Therefore, if
\begin{align*}
    \varepsilon \ge \log \Bigg\{ \max_{\ket{\phi_a} \in \cH_A} \max_{\ket{\phi_b} \in \cH_B} \frac{\max_{\ket{\psi} \in \bH_s} \bra{\psi}\cK_{\phi} \ket{\psi} }{\min_{\ket{\psi} \in \bH_s} \bra{\psi} \cK_{\phi} \ket{\psi}} \Bigg\},
\end{align*}
then each summand is non-positive, i.e.,
\begin{align*}
    \max_{\ket{\psi} \in \bH_s} \bra{\psi}\cK_{\phi_{j,k}} \ket{\psi} - e^{\varepsilon} \min_{\ket{\psi'} \in \bH_s} \bra{\psi'} \cK_{\phi_{j,k}} \ket{\psi'} \le 0,
\end{align*}
and hence $Q(s)\le 0$. This implies an upper bound on $\varepsilon^* (s)$
\begin{align}\label{eq:elm-ub}
    \varepsilon^*(s) \le \log \Bigg\{ \max_{\ket{\phi_a} \in \cH_A} \max_{\ket{\phi_b} \in \cH_B} \frac{J_{\max}(\cK_\phi,s)}{J_{\min}(\cK_\phi,s)} \Bigg\}.
\end{align}

To show tightness, choose
\begin{align*}
    \ket{\phi^*} = \ket{\phi_a^*} \otimes \ket{\phi_b^*} = \arg\max_{\ket{\phi_a} \in \cH_A} \arg\max_{\ket{\phi_b} \in \cH_B} \frac{J_{\max}(\cK_\phi,s)}{J_{\min}(\cK_\phi,s)}
\end{align*}
and let $\ket{\psi}$ and $\ket{\psi'}$ attain $J_{\max}(\cK_{\phi^*},s)$ and $J_{\min}(\cK_{\phi^*},s)$, respectively. Consider the binary-outcome product measurement
\begin{align*}
    &\cM_{a,0} = \ket{\phi_a^*}\bra{\phi_a^*}, \quad \cM_{a,1} = I_{\cH_A} - \ket{\phi_a^*}\bra{\phi_a^*}, \\
    &\cM_{b,0} = \ket{\phi_b^*}\bra{\phi_b^*}, \quad \cM_{b,1} = I_{\cH_B} - \ket{\phi_b^*}\bra{\phi_b^*}, 
\end{align*}
Then we have
\begin{align*}
  \Pr\Big[ M(\mathcal E(\ket{\psi} \bra{\psi})) = (0,0) \Big] - e^{\varepsilon^*} \Pr\Big[ M(\mathcal E(\ket{\psi'} \bra{\psi'})) = (0,0) \Big] = J_{\max}(\cK_{\phi^*},s) - e^{\varepsilon^*}  J_{\min }(\cK_{\phi^*},s),
\end{align*}

Requiring the right-hand side to be non-positive yields
\begin{align}\label{eq:elm-lb}
    \varepsilon^*(s) \ge \log\Bigg(\frac{J_{\max}(\cK_{\phi^*},s)}{J_{\min }(\cK_{\phi^*},s)}\Bigg) = \log \Bigg\{ \max_{\ket{\phi_a} \in \cH_A} \max_{\ket{\phi_b} \in \cH_B} \frac{J_{\max}(\cK_\phi,s)}{J_{\min}(\cK_\phi,s)} \Bigg\},
\end{align}

Combining~\eqref{eq:elm-ub} and~\eqref{eq:elm-lb}, we have
\begin{align*}
     \varepsilon^*(s)  = \log \Bigg\{ \max_{\ket{\phi_a} \in \cH_A} \max_{\ket{\phi_b} \in \cH_B} \frac{J_{\max}(\cK_\phi,s)}{J_{\min}(\cK_\phi,s)} \Bigg\},
\end{align*}
which completes the proof. \hfill \qed

\section{Proof of Theorem~\ref{thm:max-product}}\label{app:thm:max-product}

We split the whole proof into three steps. \\ 
\noindent\textbf{Step 1: Explicit expression of KKT condition.}
Since the Euclidean gradients w.r.t.\ $U_A$ and $U_B$ of $\bfF$ are
\begin{align*}
&\nabla_{U_A} \bfF(\lmd,U_A,U_B) = 2 \Lambda^{1/2}\widetilde{\cK}_B\Lambda^{1/2} U_A \cK_{\phi,a}, \\
&\nabla_{U_B} \bfF(\lmd,U_A,U_B) = 2  \Lambda^{1/2}\widetilde{\cK}_A\Lambda^{1/2} U_B \cK_{\phi,b}, \\
 &\nabla_{\lmd_j}\bfF = \frac{1}{2\sqrt{\lmd_j}} \Big(\widetilde{\cK}_A\Lambda^{1/2}\widetilde{\cK}_B
      + \widetilde{\cK}_B\Lambda^{1/2}\widetilde{\cK}_A\Big)_{jj},
\end{align*}
the Riemannian stationarity conditions~\eqref{eq:ST-U} and~\eqref{eq:ST-lambda-max} are
\begin{align}
\textsc{(ST-$U_A$)} & \qquad\grad_{U_A}\bfF =
2U_A\skew \Big( (U_A^{*})^\dg \Lambda^{1/2}\widetilde{\cK}^*_B\Lambda^{1/2} \UAs\cK_{\phi,a}\Big) = 0, \label{eq:grad-UA-riem}
\\
\textsc{(ST-$U_B$)} & \qquad\grad_{U_B}\bfF=
2U_B\skew \Big( (U_B^{*})^\dg \Lambda^{1/2}\widetilde{\cK}_A^*\Lambda^{1/2} U_B^*\cK_{\phi,b}\Big) = 0. \label{eq:grad-UB-riem} 
\\
\textsc{(ST-$\lambda$)} & \qquad \frac{1}{2\sqrt{\lmd_j^*}} \Bigl(\widetilde{\cK}_A^*\Lambda^{*1/2}\widetilde{\cK}_B^*+ \widetilde{\cK}_B^*\Lambda^{*1/2}\widetilde{\cK}_A^* \Bigr)_{jj} = \nu^*-\eta_j^*+\xi^*(1+\log\lmd_j^*), \quad j=1,\ldots,N. \label{eq:ST-lambda-max-ex}
\end{align}

Since $\UAs$ is unitary, \eqref{eq:grad-UA-riem} is equivalent to
\begin{align}\label{eq:skew-zero}
\skew \Big(\UAsd \Lambda^{1/2}\widetilde{\cK}_B^*\Lambda^{1/2} \UAs \cK_{\phi,a}\Big)=0.
\end{align}
Recall that $\skew(Z)=\tfrac12(Z-Z^\dg)$, hence $\skew(Z)=0$ if and only if $Z=Z^\dg$. Since both $U_A^{\dg} \Lambda^{1/2}\widetilde{\cK}^*_B\Lambda^{1/2} U_A$ and $\cK_{\phi,a}$ are Hermitian, the condition~\eqref{eq:skew-zero} is equivalent to
    \begin{align*}                       
    (\UAs)^{\dg}\Lambda^{1/2}\widetilde{\cK}^*_B\Lambda^{1/2}\UAs\cK_{\phi,a}
    = \Big((\UAs)^{\dg}\Lambda^{1/2}\widetilde{\cK}^*_B\Lambda^{1/2}\UAs \cK_{\phi,a}\Big)^{\dg} 
    = \cK_{\phi,a} (\UAs)^{\dg}\Lambda^{1/2}\widetilde{\cK}^*_B\Lambda^{1/2}\UAs
    \end{align*}
Conjugating by $\UAs$ yields the equivalent condition
\begin{align}\label{eq:comm-UA-tilde}
\Lambda^{1/2}\widetilde{\cK}_B^*\Lambda^{1/2} \widetilde{\cK}_A^* = \widetilde{\cK}_A^* \Lambda^{1/2}\widetilde{\cK}_B^*\Lambda^{1/2}.
\end{align}

Applying the same argument to the $U_B$-stationarity
condition~\eqref{eq:grad-UB-riem} gives
\begin{align}\label{eq:comm-UB-tilde}
\Lambda^{1/2}\widetilde{\cK}_A^*\Lambda^{1/2}
\widetilde{\cK}_B^*
=
\widetilde{\cK}_B^*\Lambda^{1/2}
\widetilde{\cK}_A^*\Lambda^{1/2}.
\end{align}

Now define
\[
X:=\Lambda^{1/2}\widetilde{\cK}_A^*\Lambda^{1/2}\succeq 0,
\qquad
Y:=\widetilde{\cK}_B^*\succeq 0.
\]
Equation~\eqref{eq:comm-UB-tilde} shows that $XY=YX$. Since $X$ and
$Y$ are positive semidefinite and commute, they are simultaneously
diagonalizable.

Back to the optimization problem, we have
\begin{align} \label{eq:maxUAUB} 
J_{\max}(\cK_{\phi},s) = & \max_{\lambda\in \mathbb R^N}
\max_{U_A \in {\rm U}(\cH_A)} \max_{U_B \in {\rm U}(\cH_B)} \bfF(\lambda,U_A,U_B)
\\
= & \max_{\lambda\in \mathbb R^N}
\max_{U_A \in {\rm U}(\cH_A)} \max_{U_B \in {\rm U}(\cH_B)} 
\Tr(XY), \nonumber
\\
&\text{subject to} \quad \lmd_j\ge 0, \quad \sum_{j=1}^N \lmd_j =1, \quad \bfH(\lmd)\ge s. \nonumber
\end{align}

Since $X,Y\succeq 0$ and are simultaneously unitarily diagonalizable at stationarity, the maximization of $\Tr(XY)$ is achieved when their eigenvectors are aligned and the eigenvalues are arranged in the same order. More precisely, for a fixed $\lmd \in \mathbb{R^N}$,
\[
\max_{U_A \in {\rm U}(\cH_A)} \max_{U_B \in {\rm U}(\cH_B)} \Tr \bigl(X Y\bigr) = \sum_{j=1}^N \sigma_j(X)\,\sigma_j(Y),
\]
In our setting, we write $Y=\widetilde{\cK}_B$ and note that $\widetilde{\cK}_B$ can be taken diagonal with eigenvalues sorted in descending order. Hence, the optimal alignment is attained without further rotation on subsystem $B$, and we may choose $U_B^*=I_{\cH_B}$.

\medskip 

\noindent\textbf{Step 2. Characterization of the spectrum of $X$.}
To find $U_A^*$, we need to characterize the spectrum of
\[
X=\Lambda^{1/2}\widetilde{\cK}_A\Lambda^{1/2},
\]
for a fixed $\lmd\in\mathbb R^N$. For any $m=1,\ldots,N$, we have
\begin{align}
\sum_{j=1}^m \sigma_j(X) &= \max_{P^\dg P=I_m} \Tr\bigl(P^\dg X P\bigr) \nonumber
\\
&= \max_{P^\dg P=I_m} \Tr\!\bigl(\widetilde{\cK}_A\,\Lambda^{1/2}PP^\dg\Lambda^{1/2}\bigr).
\label{eq:kyfan-start}
\end{align}
Since both $\widetilde{\cK}_A$ and $\Lambda^{1/2}PP^\dg\Lambda^{1/2}$ are positive semidefinite, Lemma~\ref{lemma:trace-ineq} yields
\begin{align}
\Tr\!\bigl(\widetilde{\cK}_A\,\Lambda^{1/2}PP^\dg\Lambda^{1/2}\bigr)
\le
\sum_{i=1}^N \alpha_i\,\sigma_i\!\bigl(\Lambda^{1/2}PP^\dg\Lambda^{1/2}\bigr).
\end{align}
Since $P$ has rank $m$, the matrix $\Lambda^{1/2}PP^\dg\Lambda^{1/2}$ has rank at most
$m$, and its nonzero eigenvalues coincide with those of $P^\dg\Lambda P$. Therefore,
\begin{align}
\sum_{j=1}^m \sigma_j(X)
&\le
\max_{P^\dg P=I_m}\sum_{i=1}^m \alpha_i\,\sigma_i(P^\dg\Lambda P) \nonumber\\
&\le
\sum_{i=1}^m \alpha_i \max_{P^\dg P=I_m}\sigma_i(P^\dg\Lambda P).
\end{align}
By the Cauchy interlacing theorem (Lemma~\ref{lemma:Cauchy}), we have
\begin{align}\label{eq:kyfan-end}
 \sigma_i(P^\dg\Lambda P)\le \lambda_i,
\qquad 1\le i\le m.
\end{align}
Combining~\eqref{eq:kyfan-start} and~\eqref{eq:kyfan-end} yields the bound
\begin{align}\label{eq:kyfanbound}
\sum_{i=1}^m \sigma_i(X)\le \sum_{i=1}^m \alpha_i\lambda_i,
\qquad m=1,2,\ldots,N.
\end{align}

Applying Lemma~\ref{lemma:majorization-dec} to~\eqref{eq:kyfanbound}, we obtain, for any fixed $\lmd$, that
\[
\max_{U_A\in\mathrm{U}(\cH_A)}\max_{U_B\in\mathrm{U}(\cH_B)} \Tr(XY)
\le
\sum_{j=1}^N \alpha_j\beta_j\lambda_j.
\]
Moreover, equality is attained when $U_A^*=I_{\cH_A}$. Consequently, the original maximization problem reduces to
\begin{align*}
J_{\max}
&=
\max_{\lmd}\Tr\!\bigl(\Lambda^{1/2}\cK_A\Lambda^{1/2}\cK_B\bigr)
=
\max_{\lmd}\sum_{j=1}^N \mu_j\lambda_j, \\
&\text{subject to }\quad
\lambda_j\ge 0,\quad \sum_{j=1}^N\lambda_j=1,\quad \bfH(\lmd)\ge s,
\end{align*}
where $\mu_j=\alpha_j\beta_j$.

Under this formulation, the feasible set defined by the probability simplex together with the entropy constraint is convex, and the objective function is linear in $\lmd$. Therefore, the Karush--Kuhn--Tucker conditions are both necessary and sufficient for optimality.

\noindent\textbf{Step 3. Solving the KKT condition.} Given $U_A^*=I_{\cH_A}$ and $U_B^*=I_{\cH_B}$, the KKT conditions~\eqref{eq:ST-U}--~\eqref{eq:CS} are reduced to
\begin{align}
    & \mu_j - \nu + \eta_j - \xi(1 + \log \ls_j)= 0,        \quad \forall j=1,2,\dots,N  \label{eq:simplex-stationarity-max} \\
    &\eta_j \ls_j = 0, \quad \eta_j \ge 0,                    \quad \forall j=1,2,\dots,N  \label{eq:simplex-slack_eta_max} \\
    &\xi \left( \bfH( {\ls}) - s \right) = 0, \quad \xi \ge 0,  \label{eq:simplex-slack_xi_max} \\
    &\ls_j \ge 0, \quad \sum_{j=1}^N \ls_j = 1,  \quad \bfH(\ls) \ge s, \quad \forall j=1,2,\dots,N 
\end{align}
We analyze the solution based on the activity of the entropy constraint.

\paragraph{Case 3.1: $\xi = 0$ (Inactive Entropy Constraint)}
The stationarity condition \eqref{eq:simplex-stationarity-max} simplifies to:
\begin{align*}
    \mu_j - \nu + \eta_j = 0 \implies \eta_j = \nu - \mu_j.
\end{align*}
For any index $j$ where $\ls_j > 0$, the complementary slackness condition \eqref{eq:simplex-slack_eta_max} implies $\eta_j = 0$. Consequently, for such indices, $\mu_j = \nu$. Since $\eta_j \ge 0$ for all $j$, we must have $\nu - \mu_j \ge 0 \implies \nu \ge \mu_j$. Combining these, if $\ls_j > 0$, then $\mu_j = \nu \ge \mu_j$ for all $j$. Therefore, $\ls_j > 0$ is only permissible if $\mu_j$ attains the maximum value $\mu_1$. Thus, we have
\begin{align*}
    \ls_j = 0 \quad \text{if } \mu_j < \mu_{1}=\sigma_{\max}(\cK_\phi).
\end{align*}
The objective function value becomes:
\begin{align*}
    J_{\max}(\cK_\phi,s) = \sum_{\mu_j = \mu_{1}} \mu_{1} \ls_j = \mu_1.
\end{align*}
The maximum entropy is achieved by the uniform distribution over the highest-energy states, yielding $\bfH_{\max} = \log d_{\max} $. Thus, this solution is valid if and only if the required entropy $s \le \log d_{\max}$.

\paragraph{Case 3.2: $\xi > 0$ (Active Entropy Constraint)}
If $s > \log d_{\max}$, from discussion above we have $\xi > 0$. The stationarity condition \eqref{eq:simplex-stationarity-max} implies
\begin{align*}
    \log \ls_j = \frac{\mu_j - \nu + \eta_j}{\xi} - 1.
\end{align*}
Assume there exists some $k$ such that $\ls_k = 0$. Then $\log \ls_k \to -\infty$. For the right-hand side to equate to $-\infty$, we would require $\eta_k \to -\infty$, which violates $\eta_k \ge 0$.
Therefore, we must have $\ls_j > 0$ for all $j = 1, \dots, N$.

Since $\ls_j>0$ for all $j$, the complementary slackness condition~\eqref{eq:simplex-slack_eta_max} implies
$\eta_j=0$. The stationarity condition then yields
\begin{align*}
    \ls_j
    =\exp\left(\frac{\mu_j-\nu}{\xi}-1\right)
    =\frac{e^{\gamma\mu_j}}{Z(\gamma)},
\end{align*}
where $\gamma=1/\xi$ and $Z(\gamma):=e^{\gamma\nu+1}$. To determine the normalization constant $Z(\gamma)$, we invoke the primal feasibility constraint $\sum_{j=1}^N\ls_j=1$, which gives
\[
\frac{1}{Z(\gamma)}\sum_{j=1}^N e^{\gamma\mu_j}=1
\quad\Longrightarrow\quad
Z(\gamma)=\sum_{j=1}^N e^{\gamma\mu_j}.
\]

Substituting this expression back, we obtain optimal distribution
\begin{align*}
    \ls_j = \frac{e^{\gamma\mu_j}}{Z(\gamma)},
\end{align*}
where multiplier $\gamma$ is uniquely determined by the active entropy
constraint $-\sum_{j=1}^N \ls_j\log\ls_j=s$, namely,
\[
\log Z(\gamma)
-\frac{\gamma}{Z(\gamma)}
\sum_{j=1}^N \mu_j e^{\gamma\mu_j}
=s.
\]

Moreover, since $\ls_j>0$ for all $j$, the optimal distribution must assign positive weight to indices with $\mu_j<\mu_{\max}$.
Consequently, the corresponding weighted average satisfies
\begin{align*}
    J_{\max}(\cK_\phi,s) = \sum_{j=1}^N \mu_j\ls_j < \sum_{j=1}^N \mu_1\ls_j = \mu_1 = \sigma_{\max}(\cK_\phi).
\end{align*}
i.e., the objective value is strictly smaller than the maximal eigenvalue.

Finally, a direct differentiation yields
\begin{align*}
    \frac{\partial J}{\partial \gamma} = & \frac{Z''}{Z}-\Bigl(\frac{Z'}{Z}\Bigr)^2=\sum_{j=1}^N \ls_j(\gamma)\mu_j^2-\Big(\sum_{j=1}^N \ls_j(\gamma)\mu_j\Bigr)^2=\mathrm{Var}_{p(\gamma)}(\mu)>0, \\
    \frac{\partial s}{\partial \gamma} = & \frac{Z'}{Z}-J(\gamma)-\gamma J'(\gamma)=-\gamma \mathrm{Var}_{p(\gamma)}(\mu)<0,
\end{align*}
which implies $\frac{\partial J}{\partial s}<0$ and hence that $J$ is a strictly decreasing function of $s$. This completes the proof. \hfill\qed

\section{Proof of Theorem~\ref{thm:min-product}}\label{app:thm:min-product}

As in the proof of Theorem~\ref{thm:max-product}, we can write
\begin{align}\label{eq:minUAUB}
J_{\min}(\cK_{\phi},s)
&=
\min_{\lmd\in\mathbb R^N}\ \min_{U_A\in{\rm U}(\cH_A)}\ \min_{U_B\in{\rm U}(\cH_B)}
\bfF(\lmd,U_A,U_B) \nonumber\\
&=
\min_{\lmd\in\mathbb R^N}\ \min_{U_A\in{\rm U}(\cH_A)}\ \min_{U_B\in{\rm U}(\cH_B)}
\Tr(XY), \nonumber\\
&\text{subject to}\quad \lmd_j\ge 0,\quad \sum_{j=1}^N \lmd_j=1,\quad \bfH(\lmd)\ge s,
\end{align}
where $X=\Lambda^{1/2}\widetilde{\cK}_A\Lambda^{1/2}$
and $Y=\widetilde{\cK}_B$, with $\Lambda=\diag(\lmd)$.

\paragraph{Relaxed lower bound for $s<\log N$.}
Since $Y=\widetilde{\cK}_B\succeq\beta_N I$ and $X\succeq0$,
\[
\Tr(XY)\ge \beta_N\Tr(X)=\beta_N\Tr(\Lambda\widetilde{\cK}_A).
\]
Let $\lambda_1^\downarrow\ge\cdots\ge\lambda_N^\downarrow$ denote the
decreasing rearrangement of the Schmidt weights.  The von Neumann
trace inequality gives
\[
\min_{U_A}\Tr(\Lambda\widetilde{\cK}_A)
=\sum_{i=1}^N\alpha_i\lambda_{N-i+1}^\downarrow.
\]
Define $p_i:=\lambda_{N-i+1}^\downarrow$.  Relabeling does not change
normalization or entropy, so
\[
\min_{U_A,U_B}\Tr(XY)
\ge\sum_{i=1}^N\alpha_i\beta_N p_i
=\sum_{i=1}^N\widetilde\mu_i p_i,
\qquad \widetilde\mu_i:=\alpha_i\beta_N.
\]
Therefore,
\begin{align*}
J_{\min}(\cK_{\phi},s)
&\ge
\min_{p}\ \sum_{i=1}^N \widetilde\mu_i p_i
=: \widetilde J_{\min}(\cK_{\phi},s),
\\
&\text{subject to}\quad p_i\ge0,\quad
\sum_{i=1}^N p_i=1,\quad \bfH(p)\ge s.
\end{align*}

The optimization defining $\widetilde J_{\min}$ has a linear objective
over a convex feasible set and can be solved by the same KKT analysis
as in Theorem~\ref{thm:max-product}; the Gibbs exponent has the
opposite sign because the objective is minimized.  Since
$\widetilde\mu_1\ge\cdots\ge\widetilde\mu_N$, the resulting optimizer
$p_i\propto e^{-\widetilde\gamma\widetilde\mu_i}$ satisfies
$p_1\le\cdots\le p_N$, consistently with its definition as the reverse
of $\lmd^\downarrow$.

To prove the strict monotonicity in item~(iii), write
\[
p_i(\widetilde\gamma)
=\frac{e^{-\widetilde\gamma\widetilde\mu_i}}
{\widetilde Z(\widetilde\gamma)},
\qquad
\widetilde J(\widetilde\gamma)
=\sum_{i=1}^N p_i(\widetilde\gamma)\widetilde\mu_i.
\]
For $\log d_{\min}<s<\log N$, direct differentiation gives
\begin{align*}
\frac{\partial \widetilde J}{\partial\widetilde\gamma}
&=-\operatorname{Var}_{p(\widetilde\gamma)}(\widetilde\mu)<0,\\
\frac{\partial s}{\partial\widetilde\gamma}
&=-\widetilde\gamma
  \operatorname{Var}_{p(\widetilde\gamma)}(\widetilde\mu)<0.
\end{align*}
Consequently,
\[
\frac{\partial \widetilde J}{\partial s}
=\frac{1}{\widetilde\gamma}>0,
\]
so $\widetilde J_{\min}(\cK_\phi,s)$ is strictly increasing throughout
the high-entanglement regime.  At $s=\log N$, the entropy constraint
forces the uniform distribution, which is the limit
$\widetilde\gamma\to0^+$; continuity therefore extends the strict
inequality to the endpoint. This completes the proof.

\section{Proof of Proposition~\ref{prop:qldp-exp-1}}
According to the spectrum structure
\[
\mathrm{spec}(\cK_a)=\{v(p),v(p),w(p),w(p)\},\qquad
\mathrm{spec}(\cK_b)=\{v(q),v(q),w(q),w(q)\},
\]
we have
\begin{align*}
    \mu =  (v(p)v(q),v(p)v(q),w(p)w(q),w(p)w(q)), \qquad
    \widetilde{\mu} =  w(q)\{v(p),v(p),w(p),w(p)\},
\end{align*}
which further indicates $d_{\max} = d_{\min} = 2$. According to Theorems~\ref{thm:max-product} and~\ref{thm:min-product}, the transition point is $s=\log 2$.
We first consider the case that the entanglement entropy satisfies $s \le \log 2$. In this case, the extremal privacy energies are given by the maximal and minimal eigenvalues of $\cK_\phi$, i.e.,
\[
J_{\max}(s)=v(p)v(q), \qquad \widetilde J_{\min}(s)=w(p)w(q), \quad p,q\in[\tfrac12,1]
\]
Maximizing over $p,q\in[\tfrac12,1]$, the optimal ECLM-QLDP leakage level is given by
\begin{align*}
    \varepsilon^*(s)
    \le \log\!\left( \max_{p,q\in[1/2,1]} \frac{v(p)v(q)}{w(p)w(q)} \right)
    = 2\log 3,
\end{align*}
where the maximum is attained at $p=q=1$. This coincides with the leakage level in the non-entangled case.

We then consider the case that the entanglement entropy satisfies $s>\log 2$. We first analyze the maximal privacy budget $J_{\max}(s,\cK_\phi)$. Since the operator $\cK_\phi$ admits only two distinct eigenvalues, each with degeneracy two, the optimal Gibbs distribution must be uniform within each degenerate eigenspace. Let $\tau\in[\tfrac12,1]$ denote the total weight assigned to the eigenspace corresponding to the maximal eigenvalue $v(p)v(q)$, with the remaining weight $1-\tau$ assigned to the minimal-eigenvalue eigenspace $w(p)w(q)$, the maximal privacy energy takes the form
\begin{align*}
    J_{\max}(s)
    = \tau\, v(p)v(q) + (1-\tau)\, w(p)w(q).
\end{align*}
where $\tau$ is uniquely determined by the entanglement entropy constraint, i.e., $s=\log 2-\tau\log\tau-(1-\tau)\log(1-\tau)$. Similarly, the relaxed minimal privacy energy takes the form
\begin{align*}
    \widetilde J_{\min}(s)
    &= (1-\tau)\, v(p)w(q) + \tau\, w(p)w(q).
\end{align*}
Therefore, the optimal ECLM--QLDP leakage level admits the upper bound
\begin{align*}
    \varepsilon^*(s)
    &\le \log\!\left(
    \max_{p,q\in[1/2,1]}
    \frac{\tau\, v(p)v(q) + (1-\tau)\, w(p)w(q)}
         {(1-\tau)\, v(p)w(q) + \tau\, w(p)w(q)}
    \right)
    = \log\!\left(\frac{8\tau+1}{3-2\tau}\right),
\end{align*}
where the maximum is attained at $p=q=1$.

\end{document}